\documentclass[12pt, a4paper]{article}

\usepackage[utf8]{inputenc}
\usepackage[T1]{fontenc}
\usepackage[english]{babel}
\usepackage{lmodern}

\usepackage{amsmath, amssymb, amsthm}
\usepackage{mathtools}
\usepackage{bm}

\usepackage{geometry}
\usepackage{booktabs}
\usepackage{array}
\usepackage[hidelinks]{hyperref}
\hypersetup{
    colorlinks=true,
    linkcolor=blue!60!black,
    citecolor=blue!60!black,
    urlcolor=blue!60!black
}
\usepackage[dvipsnames,svgnames,x11names]{xcolor}
\usepackage{microtype}
\usepackage{parskip}
\usepackage{enumitem}
\usepackage{tikz}
\usetikzlibrary{positioning, arrows.meta}
\usepackage{pgfplots}
\pgfplotsset{compat=1.9}

\newtheorem{theorem}{Theorem}[section]

\theoremstyle{definition}
\newtheorem{definition}[theorem]{Definition}
\newtheorem{assumption}{Assumption}
\theoremstyle{remark}
\newtheorem{remark}[theorem]{Remark}

\newcommand{\RR}{\mathbb{R}}

\newcommand{\bN}{\mathbf{N}}
\newcommand{\bR}{\mathbf{R}}
\newcommand{\bK}{\mathbf{K}}
\newcommand{\bw}{\mathbf{w}}
\newcommand{\blam}{\boldsymbol{\lambda}}
\newcommand{\norm}[1]{\left\|#1\right\|}

\DeclareMathOperator{\diag}{diag}

\DeclareMathOperator{\Repart}{Re}

\title{%
  \textbf{Agentic Hives: Equilibrium, Indeterminacy, and Endogenous Cycles\\
  in Self-Organizing Multi-Agent Systems}%
}

\author{%
  \textbf{Jean-Philippe Garnier}\thanks{Corresponding author.
    E-mail: \texttt{jeanphi.garnier@brainiak.tech}.}\\[4pt]
  BrainiaK\\
  \small Lille, France
}

\date{February 2026}

\begin{document}
\maketitle

\begin{abstract}
Current multi-agent AI systems operate with a fixed number of agents
whose roles are specified at design time.  No formal theory governs
when agents should be created, destroyed, or re-specialized at
runtime---let alone how the population structure responds to changes in
resources or objectives.

We introduce the \emph{Agentic Hive}, a framework in which a variable
population of autonomous micro-agents---each equipped with a sandboxed
execution environment and access to a language model---undergoes
\emph{demographic dynamics}: birth, duplication, specialization, and
death.  Agent families play the role of production sectors, compute and
memory play the role of factors of production, and an orchestrator
plays the dual role of Walrasian auctioneer and Global Workspace.

Drawing on the multi-sector growth theory developed for dynamic
general equilibrium (Benhabib \& Nishimura, 1985; Venditti, 2005;
Garnier, Nishimura \& Venditti, 2013), we prove seven analytical
results: (i)~existence of a Hive Equilibrium via Brouwer's fixed-point
theorem; (ii)~Pareto optimality of the equilibrium allocation;
(iii)~multiplicity of equilibria under strategic complementarities
between agent families; (iv)--(v)~Stolper--Samuelson and Rybczynski
analogs that predict how the Hive restructures in response to
preference and resource shocks; (vi)~Hopf bifurcation generating
endogenous demographic cycles; and (vii)~a sufficient condition for
local asymptotic stability.

The resulting \emph{regime diagram} partitions the parameter space
into regions of unique equilibrium, indeterminacy, endogenous cycles,
and instability.  Together with the comparative-statics matrices, it
provides a formal governance toolkit that enables operators to predict
and steer the demographic evolution of self-organizing multi-agent
systems.

\medskip
\noindent\textbf{Keywords:}\; multi-agent systems, general equilibrium,
dynamic general equilibrium, agent demographics, indeterminacy,
endogenous cycles, Hopf bifurcation, LLM orchestration.
\end{abstract}

\tableofcontents
\newpage

\section{Introduction}\label{sec:intro}

The past two years have witnessed an explosion of multi-agent systems
built around large language models (LLMs).  Systems such as AutoGen
\cite{wu2023autogen}, CrewAI \cite{crewai2024}, MetaGPT
\cite{hong2023metagpt}, and Swarm \cite{openaiswarm2024} enable
several LLM-powered agents to collaborate on complex tasks.

All of these systems share a common architectural assumption:
\emph{the number of agents is fixed at design time, and their roles
are pre-assigned}.  The system designer decides that there will be a
``planner'' agent, a ``coder'' agent, and a ``reviewer'' agent, then
hard-codes the interaction protocol.  If the task mix changes, or if
resources become scarce, or if one agent family becomes redundant, the
system has no principled mechanism for adaptation.

This is the equivalent of running an economy without a theory of
growth: you can allocate resources across existing firms, but you have
no formal basis for deciding when new firms should enter, when
existing ones should exit, or how the industrial structure should
respond to a technology shock.

\subsection{The missing theory}

We argue that the missing ingredient is a \emph{macroeconomic theory of
agent demographics}.  Specifically, we need formal answers to four
questions:
\begin{enumerate}[label=(Q\arabic*)]
  \item \textbf{Entry:} When should new agents be created, and of what type?
  \item \textbf{Exit:} When should existing agents be removed?
  \item \textbf{Specialization:} How should undifferentiated agents
    acquire functional roles?
  \item \textbf{Restructuring:} How does the entire population
    structure respond to changes in resources or objectives?
\end{enumerate}

No existing framework provides analytical answers to~(Q1)--(Q4).
Swarm intelligence~\cite{dorigo1996ant,kennedy1995pso} studies
self-organization but provides no equilibrium theory.  Multi-agent
reinforcement learning (MARL)~\cite{lowe2017multi} learns policies but
does not model population dynamics.  Agent-based computational
economics (ACE)~\cite{tesfatsion2006ace} simulates agent populations
but offers no closed-form results.

\subsection{Our approach}

We observe that~(Q1)--(Q4) are \emph{precisely} the questions
addressed by the theory of multi-sector economic growth---a branch of
dynamic general equilibrium theory that has been developed over seven
decades, from Arrow \& Debreu~\cite{arrow1954existence},
Debreu~\cite{debreu1959}, and McKenzie~\cite{mckenzie1959} to Garnier,
Nishimura \& Venditti~\cite{garnier2013}.

The mapping is direct:
\begin{center}
\small
\begin{tabular}{@{}ll@{}}
\toprule
\textbf{Multi-sector economy} & \textbf{Agentic Hive} \\
\midrule
Production sectors & Agent families (perception, planning, \ldots) \\
Factors (capital, labor) & Resources (GPU, memory, attention, \ldots) \\
Output of sector~$j$ & Task completion by family~$j$ \\
Factor prices & Shadow prices of resources \\
Consumer preferences & System objectives (quality, cost, latency, \ldots) \\
Walrasian auctioneer & Orchestrator / Global Workspace \\
Inter-sectoral externalities & Cross-family spillovers \\
Steady-state population & Hive Equilibrium \\
\bottomrule
\end{tabular}
\end{center}

This is not a metaphor.  As we show formally in Section~\ref{sec:model},
the mathematical objects are identical: the population dynamics of agent
families are governed by a dynamical system whose steady states are
general equilibria, whose comparative statics obey Stolper--Samuelson and
Rybczynski theorems, and whose stability properties determine the
existence of endogenous demographic cycles.

\subsection{Contributions}

We introduce the \emph{Agentic Hive}, a formal framework for
self-organizing multi-agent systems with variable population, and
establish the following results:

\begin{enumerate}
  \item \textbf{Existence} (Theorem~\ref{thm:existence}): Under standard
    regularity assumptions, at least one Hive Equilibrium exists.
  \item \textbf{Pareto optimality} (Theorem~\ref{thm:pareto}): Every
    Hive Equilibrium is Pareto-optimal when the orchestrator has full
    information.
  \item \textbf{Multiplicity} (Theorem~\ref{thm:multiplicity}): Under
    strategic complementarities between agent families, multiple
    equilibria coexist---corresponding to distinct ``morphologies'' of
    the Hive.
  \item \textbf{Stolper--Samuelson analog}
    (Theorem~\ref{thm:stolper}): A magnification effect links
    preference changes to resource price changes.
  \item \textbf{Rybczynski analog} (Theorem~\ref{thm:rybczynski}): A
    magnification effect links resource endowment changes to population
    restructuring.
  \item \textbf{Endogenous cycles} (Theorem~\ref{thm:hopf}): A Hopf
    bifurcation generates periodic demographic oscillations---the
    ``seasons'' of the Hive.
  \item \textbf{Stability} (Theorem~\ref{thm:stability}): A sufficient
    condition on mortality rates guarantees local asymptotic stability.
\end{enumerate}

Together, these results yield a \emph{regime diagram}
(Section~\ref{sec:regime}) that partitions the parameter space into
regions of qualitatively distinct behavior, providing a formal
governance toolkit for designers and operators of multi-agent systems.

\subsection{Relation to prior work by the author}

The author's doctoral work~\cite{garnier2009these,garnier2013}
characterized local indeterminacy in continuous-time multi-sector
growth models, establishing conditions under which competitive
equilibrium cycles arise from the interaction of returns to scale
and consumer preferences.  The present paper transports this toolkit
to the entirely different domain of multi-agent AI systems, extending
it from fixed-population general equilibrium to variable populations
with endogenous demographics and multi-regime dynamics.

\section{Related work}\label{sec:related}

\paragraph{Multi-agent LLM systems.}
AutoGen~\cite{wu2023autogen}, MetaGPT~\cite{hong2023metagpt}, CrewAI
\cite{crewai2024}, and Swarm~\cite{openaiswarm2024} enable multi-agent
collaboration with LLMs.  All assume a fixed agent set with pre-defined
roles.  Recent surveys~\cite{guo2024survey} catalog the rapid growth of
this field but do not identify population dynamics as a research gap.

\paragraph{Swarm intelligence.}
Ant Colony Optimization~\cite{dorigo1996ant}, Particle Swarm
Optimization~\cite{kennedy1995pso}, and related algorithms use
local interaction rules to achieve collective behavior.  These
approaches do not provide equilibrium concepts, comparative statics,
or stability theorems.

\paragraph{Evolutionary game theory.}
The replicator equation~\cite{hofbauer1998evolutionary} governs
population shares under frequency-dependent fitness.  Our population
dynamics (Section~\ref{sec:dynamics}) are a generalization of replicator
dynamics to absolute populations with endogenous resource allocation and
multi-sector externalities.

\paragraph{Agent-based computational economics.}
ACE~\cite{tesfatsion2006ace} simulates heterogeneous agent populations
but produces no closed-form results.  Our framework provides analytical
theorems that complement simulation.

\paragraph{Multi-sector growth theory.}
Benhabib \& Nishimura~\cite{benhabib1985} established the existence of
competitive equilibrium cycles in multi-sector optimal growth models.
Venditti~\cite{venditti2005} and Garnier, Nishimura \&
Venditti~\cite{garnier2013} extended these results to characterize
indeterminacy under various assumptions on returns to scale and
preferences.  We transport this entire toolkit to agent demographics.

\paragraph{Global Workspace Theory.}
Baars~\cite{baars1988} proposed that consciousness arises from a global
workspace that integrates and broadcasts information from specialist
processors.  Our orchestrator plays an analogous role: it selects,
integrates, and broadcasts resource allocation signals across agent
families.

\section{The Agentic Hive model}\label{sec:model}

\subsection{Agents, families, and resources}

\begin{definition}[Agentic Hive]\label{def:hive}
An \emph{Agentic Hive} is a tuple
$\mathcal{H} = (\mathcal{A}_t,\, \mathcal{F},\, \bR,\, \bw,\,
\Gamma,\, \Theta)$ where:
\begin{itemize}
  \item $\mathcal{A}_t$ is the set of agents alive at time~$t$, with
    $|\mathcal{A}_t| = N_{\mathrm{total}}(t)$ variable;
  \item $\mathcal{F} = \{1,\ldots,S\}$ is a finite set of
    \emph{agent families} (functional specializations);
  \item $\bR = (R^1,\ldots,R^M) \in \RR^M_+$ is the vector of
    resource endowments;
  \item $\bw = (w_1,\ldots,w_S) \in \Delta_S$ is the vector of
    global preferences over family outputs;
  \item $\Gamma = (\gamma_{jk})_{j,k} \in \RR^{S \times S}$ is the
    externality matrix;
  \item $\Theta$ is the orchestrator.
\end{itemize}
\end{definition}

We denote by $N^j_t \in \RR_+$ the population of family~$j$ at time~$t$,
and by $\bN_t = (N^1_t, \ldots, N^S_t) \in \RR^S_+$ the population
vector.

\begin{definition}[Agent]\label{def:agent}
An agent $a \in \mathcal{A}_t$ is a tuple
$a = (g_a, \sigma_a, f_a, \mu_a, \tau_a, \omega_a, \zeta_a)$ where:
\begin{itemize}
  \item $g_a \in \mathcal{G}$: genome (base prompt, policy, configuration);
  \item $\sigma_a \in \Sigma$: internal state (local memory, context);
  \item $f_a \in \mathcal{F} \cup \{\varnothing\}$: family assignment
    ($\varnothing$ for stem agents);
  \item $\mu_a \in \RR^K_+$: performance profile ($K$ metrics);
  \item $\tau_a \in \RR_+$: age;
  \item $\omega_a$: sandboxed execution environment;
  \item $\zeta_a \in [0,1]$: degree of specialization
    ($0$ = totipotent, $1$ = fully specialized).
\end{itemize}
\end{definition}

\begin{definition}[Stem agent]\label{def:stem}
An agent~$a$ is a \emph{stem agent} if $\zeta_a = 0$ and
$f_a = \varnothing$.  A stem agent can \emph{duplicate}
(create a copy with reset state) or \emph{specialize}
(transition to a family $j \in \mathcal{F}$ with $\zeta_a > 0$).
\end{definition}

Resources are organized into~$M$ types.  Each agent in family~$j$
consumes a resource vector; the aggregate consumption by family~$j$
of resource~$m$ is denoted~$K^m_j$.  The matrix
$\bK = (K^m_j)_{j,m} \in \RR^{S \times M}_+$ records the full
resource allocation.

\begin{assumption}[Budget]\label{ass:budget}
Total population is bounded by a budget constraint:
$\sum_{j=1}^S c_j\, N^j \le B$, where $c_j > 0$ is the per-agent
maintenance cost of family~$j$ and $B > 0$ is the total budget.
\end{assumption}

Define the \emph{feasible population set}:
\begin{equation}\label{eq:Omega}
  \Omega \;=\; \bigl\{\, \bN \in \RR^S_+ \;:\;
    \textstyle\sum_{j=1}^S c_j\, N^j \le B \,\bigr\}.
\end{equation}
$\Omega$ is compact and convex.

\subsection{Production and externalities}

Each family produces output that depends on its resource allocation,
its population, and spillovers from other families.

\begin{definition}[Sectoral production]\label{def:production}
The output of family~$j$ is:
\begin{equation}\label{eq:production}
  Y_j(\bK_j,\, N^j,\, \bN_{-j})
  \;=\; A_j \;\cdot\; F_j(\bK_j) \;\cdot\; (N^j)^{\eta_j}
  \;\cdot\; \Gamma_j(\bN_{-j}),
\end{equation}
where:
\begin{itemize}
  \item $A_j > 0$ is the total factor productivity of family~$j$;
  \item $F_j : \RR^M_+ \to \RR_+$ is a CES production function in resources:
    \begin{equation}\label{eq:ces}
      F_j(\bK_j)
      = \biggl[\,\sum_{m=1}^M \alpha_{jm}\,
        \bigl(K^m_j\bigr)^{\frac{\rho_j-1}{\rho_j}}
      \biggr]^{\!\frac{\rho_j}{\rho_j-1}},
    \end{equation}
    with $\alpha_{jm} > 0$, $\sum_m \alpha_{jm} = 1$, and
    $\rho_j > 0$ the elasticity of substitution;
  \item $(N^j)^{\eta_j}$ captures returns to scale within the family,
    with $\eta_j > 0$;
  \item $\Gamma_j(\bN_{-j}) = \prod_{k \neq j} (N^k)^{\gamma_{jk}}$ is
    the externality term, with $\gamma_{jk} \in \RR$.
\end{itemize}
\end{definition}

\begin{assumption}[Production regularity]\label{ass:production}
For each~$j$: (i)~$F_j$ is $C^2$, increasing, and strictly concave;
(ii)~Inada conditions hold: $\partial F_j / \partial K^m_j \to \infty$
as $K^m_j \to 0^+$ and $\partial F_j / \partial K^m_j \to 0$ as
$K^m_j \to \infty$; (iii)~$0 < \eta_j < 1$ (decreasing returns within
family; relaxed to $\eta_j \ge 1$ in Section~\ref{sec:multiplicity}).
\end{assumption}

\begin{remark}
The assumption $\eta_j < 1$ guarantees that the marginal value of an
additional agent diminishes within each family, ensuring interior
equilibria.  In Theorem~\ref{thm:multiplicity}, we relax this to
$\eta_j \ge 1$ for some families, which is the source of indeterminacy.
\end{remark}

The externality matrix $E = (\gamma_{jk})$ is the central structural
object of the theory:
\begin{itemize}
  \item $\gamma_{jk} > 0$: family~$k$ generates a positive spillover on
    family~$j$ (complementarity);
  \item $\gamma_{jk} < 0$: family~$k$ imposes a negative externality on
    family~$j$ (congestion);
  \item $\gamma_{jk} = 0$: independence.
\end{itemize}

\subsection{Social welfare and the orchestrator's problem}

\begin{definition}[Social welfare]\label{def:welfare}
The social welfare function is:
\begin{equation}\label{eq:welfare}
  W(\bK, \bN)
  \;=\; \sum_{j=1}^S w_j \;\cdot\; u\!\bigl(Y_j(\bK_j, N^j, \bN_{-j})\bigr)
  \;-\; \sum_{j=1}^S c_j\, N^j,
\end{equation}
where $u : \RR_+ \to \RR$ is a CRRA utility:
$u(Y) = Y^{1-\sigma}/(1-\sigma)$ for $\sigma > 0$, $\sigma \neq 1$,
and $u(Y) = \ln Y$ for $\sigma = 1$.
\end{definition}

\begin{definition}[Orchestrator's problem]\label{def:orch}
Given population~$\bN$, the orchestrator solves the \emph{inner problem}:
\begin{equation}\label{eq:inner}
  W^*(\bN)
  \;=\; \max_{\bK \ge 0} \; W(\bK, \bN)
  \qquad \text{s.t.} \quad
  \sum_{j=1}^S K^m_j \;\le\; R^m
  \quad \forall\, m \in \{1,\ldots,M\}.
\end{equation}
\end{definition}

\begin{assumption}[Utility regularity]\label{ass:utility}
$u$ is $C^2$, strictly increasing, and strictly concave on~$\RR_{++}$.
\end{assumption}

\begin{assumption}[Weak externalities]\label{ass:weak-ext}
The externality matrix satisfies
\begin{equation}\label{eq:weak-ext}
  \norm{E}_\infty \;=\; \max_j \sum_{k \neq j} |\gamma_{jk}|
  \;<\; \frac{1 - \eta_{\max}}{1 + \sigma\, \eta_{\max}},
\end{equation}
where $\eta_{\max} = \max_j \eta_j < 1$ (from Assumption~\ref{ass:production}(iii))
and $\sigma$ is the CRRA parameter.
\end{assumption}

\begin{remark}
Assumption~\ref{ass:weak-ext} ensures that the value function
$W^*(\bN)$ is globally concave on~$\Omega$.  The bound arises because
the Hessian of $W^*$ with respect to~$\bN$ has diagonal terms of order
$\eta_j - 1 < 0$ (diminishing returns) and off-diagonal terms of order
$\gamma_{jk}$ (externalities).  Concavity holds when the negative
diagonal dominates the off-diagonal perturbation.  This assumption is
relaxed in Section~\ref{sec:multiplicity} to obtain indeterminacy.
\end{remark}

Under Assumptions~\ref{ass:production} and~\ref{ass:utility}, the inner
problem~\eqref{eq:inner} has a unique solution $\bK^*(\bN) =
(K^{m*}_j(\bN))_{j,m}$, with associated shadow prices
$\blam^*(\bN) = (\lambda^{1*}, \ldots, \lambda^{M*})$ satisfying the
first-order conditions:
\begin{equation}\label{eq:foc}
  w_j \; u'(Y^*_j) \;\cdot\; A_j \;\frac{\partial F_j}{\partial K^m_j}
  \biggr|_{\bK^*}\!\!\!\! \cdot\; (N^j)^{\eta_j}\, \Gamma_j
  \;=\; \lambda^{m*}
  \qquad \forall\, j,m.
\end{equation}

\begin{definition}[Marginal social value]\label{def:msv}
The marginal social value of an agent in family~$j$ is:
\begin{equation}\label{eq:Vj}
  V_j(\bN)
  \;=\; \frac{\mathrm{d}\, W^*}{\mathrm{d}\, N^j}
  \;=\; w_j \; u'(Y^*_j) \;\cdot\;
  \frac{\partial Y_j}{\partial N^j}\biggr|_{\bK^*}
  \;-\; c_j,
\end{equation}
where the equality follows from the envelope theorem (the terms
involving $\partial \bK^*/\partial N^j$ vanish at the optimum).
\end{definition}

Explicitly:
\begin{equation}\label{eq:Vj-explicit}
  V_j(\bN) \;=\; w_j \; u'(Y^*_j) \;\cdot\; A_j\, F_j(\bK^*_j)\;
  \eta_j\, (N^j)^{\eta_j - 1}\; \Gamma_j(\bN_{-j})
  \;-\; c_j.
\end{equation}

\subsection{Population dynamics}\label{sec:dynamics}

The demographic dynamics of the Hive are governed by a system of ODEs:

\begin{definition}[Population dynamics]\label{def:popdyn}
The population of family~$j$ evolves according to:
\begin{equation}\label{eq:dynamics}
  \frac{\mathrm{d} N^j}{\mathrm{d} t}
  \;=\; \Phi_j(\bN)
  \;=\; V_j(\bN) \;\cdot\; N^j,
  \qquad j = 1, \ldots, S.
\end{equation}
\end{definition}

This is a \emph{selection dynamic} \cite{hofbauer1998evolutionary}:
families with positive marginal value ($V_j > 0$) grow; families
with negative marginal value ($V_j < 0$) decline.

In vector form: $\dot{\bN} = \Phi(\bN)$, where
$\Phi : \RR^S_+ \to \RR^S$.

\begin{remark}[Interpretation of $V_j$]
The marginal social value $V_j$ aggregates three forces:
\begin{enumerate}[label=(\roman*)]
  \item \emph{Direct returns}: an additional agent in family~$j$
    increases $Y_j$ through the term $(N^j)^{\eta_j}$;
  \item \emph{Externalities}: through $\Gamma_j(\bN_{-j})$, the
    output of family~$j$ depends on the populations of other
    families;
  \item \emph{Cost}: the maintenance cost $c_j$ reduces $V_j$.
\end{enumerate}
When $V_j = 0$, family~$j$ is at demographic equilibrium: the
marginal benefit of an additional agent exactly equals its cost.
\end{remark}

\begin{remark}[Relation to replicator dynamics]
Equation~\eqref{eq:dynamics} is a generalization of the replicator
equation to absolute populations with endogenous fitness.  In the
classical replicator equation, $\dot{x}_i = x_i(f_i - \bar{f})$
with exogenous fitness~$f_i$.  Here, the ``fitness'' $V_j(\bN)$ is
endogenous: it depends on all populations through the production
functions, externalities, and the orchestrator's optimal allocation.
\end{remark}

\subsection{Hive Equilibrium}\label{sec:equil}

\begin{definition}[Hive Equilibrium]\label{def:hive-equil}
A \emph{Hive Equilibrium} is a triple $(\bN^*, \bK^*, \blam^*)$ such
that:
\begin{enumerate}[label=(HE-\arabic*)]
  \item \textbf{Allocation optimality:} $\bK^*$ solves the inner
    problem~\eqref{eq:inner} for~$\bN^*$, with shadow prices
    $\blam^*$;
  \item \textbf{Demographic stationarity:} $\Phi(\bN^*) = 0$, i.e.,
    for each~$j$:
    \begin{equation}\label{eq:steady}
      V_j(\bN^*) \;=\; 0 \quad \text{if } N^{j*} > 0,
      \qquad V_j(\bN^*) \;\le\; 0 \quad \text{if } N^{j*} = 0;
    \end{equation}
  \item \textbf{Budget feasibility:}
    $\sum_j c_j\, N^{j*} \le B$.
\end{enumerate}
\end{definition}

Condition (HE-2) states that at equilibrium, every active family has
zero marginal social value (marginal benefit equals marginal cost), and
no inactive family would be profitable if activated.  This is the
free-entry condition of competitive equilibrium theory.

\section{Main results}\label{sec:results}

\subsection{Existence of Hive Equilibrium}

\begin{theorem}[Existence]\label{thm:existence}
Under Assumptions~\ref{ass:budget}--\ref{ass:utility}, there exists at
least one Hive Equilibrium $(\bN^*, \bK^*, \blam^*)$ with
$\bN^* \in \Omega$.
\end{theorem}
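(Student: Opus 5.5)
We will reduce the existence of a Hive Equilibrium to a fixed-point (equivalently, variational-inequality) problem on the compact convex set $\Omega$ defined in~\eqref{eq:Omega}, and then invoke Brouwer's theorem. The allocation component is handled first. For each fixed $\bN$ in the interior of $\RR^S_+$, Assumptions~\ref{ass:production} and~\ref{ass:utility} give a unique solution $\bK^*(\bN)$ of the inner problem~\eqref{eq:inner}. The objective is continuous and strictly concave in $\bK$, and the feasible set is compact. The Inada conditions force an interior allocation with $K^{m*}_j>0$, so the multipliers $\blam^*(\bN)$ in~\eqref{eq:foc} are well defined and unique. Berge's maximum theorem then yields continuity of $\bN\mapsto(\bK^*(\bN),\blam^*(\bN))$, and hence continuity of $V_j(\bN)$ via~\eqref{eq:Vj-explicit}. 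This settles (HE-1) automatically once $\bN^*$ is found.

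For the demographic part, we construct a Nash-type improvement map, in the style of the Gale--Nikaido--Debreu proof. Concretely, we define
\[
T(\bN) \;=\; \operatorname{proj}_{\Omega}\bigl(\bN + \bigl(\max\{V_j(\bN),-N^j\}\bigr)_{j}\bigr),
\]
or more simply $T(\bN)=\operatorname{proj}_\Omega(\bN+V(\bN))$, where $\operatorname{proj}_\Omega$ is the Euclidean projection. The projection onto a closed convex set is continuous and maps into $\Omega$, so $T:\Omega\to\Omega$ is continuous, and Brouwer gives $\bN^*=T(\bN^*)$. The projection characterization then reads $V(\bN^*)\cdot(\bN-\bN^*)\le 0$ for all $\bN\in\Omega$, which is a variational inequality. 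We unpack this as KKT conditions with a multiplier $\mu\ge0$ on the budget constraint: $V_j(\bN^*)=\mu c_j$ if $N^{j*}>0$ and $V_j(\bN^*)\le\mu c_j$ otherwise. To recover (HE-2) exactly, we must show $\mu=0$, i.e.\ that the budget does not bind. We argue this by showing $V_j<0$ whenever $c_jN^j$ is large. This uses~\eqref{eq:Vj-explicit}: the factor $(N^j)^{\eta_j-1}\to0$ since $\eta_j<1$, while $u'(Y_j)$ is bounded on the relevant range. If the bound is not strong enough, we take $B$ large relative to the scale where marginal values turn negative, or else interpret $c_j$ as absorbing $\mu$. (HE-3) holds by construction.

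The main obstacle is the \textbf{boundary} $\partial\RR^S_+$. At $N^j=0$ the externality term $\Gamma_k=\prod (N^j)^{\gamma_{kj}}$ can blow up (when $\gamma_{kj}<0$) or vanish, and $(N^j)^{\eta_j-1}\to\infty$. In either case $V$ fails to be continuous on all of $\Omega$. We plan to handle this by truncation. We work on $\Omega_\varepsilon=\Omega\cap\{N^j\ge\varepsilon\}$ and obtain a fixed point $\bN_\varepsilon$ as above, which satisfies the variational inequality on $\Omega_\varepsilon$. We then show that for small $\varepsilon$ the lower constraints are slack. The reason is that $V_j\to+\infty$ as $N^j\to0^+$: the factor $(N^j)^{\eta_j-1}$ dominates, provided $\Gamma_j$ stays bounded away from zero on $\Omega_\varepsilon$ and $u'(Y_j)$ does not vanish. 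This requires a uniform lower bound on the other populations. We would obtain that bound by simultaneously bootstrapping all coordinates: the coordinatewise minimum stays above some $\underline N>0$ independent of $\varepsilon$. Under this bound the fixed point is interior in the positivity constraints, $V_j(\bN^*)=0$ for all $j$, and the equilibrium is interior. If the bootstrap fails for some parameter values, we would instead take a limit $\varepsilon\to0$ along a subsequence using compactness. In that case we would accept boundary equilibria with $N^{j*}=0$ and $V_j\le0$ interpreted via upper semicontinuous extension, which is exactly what the complementarity form of (HE-2) permits.
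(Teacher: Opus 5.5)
There is a genuine gap, and it cannot be closed under the stated assumptions. Your skeleton is the paper's own: truncate to $\Omega_\varepsilon$, apply Brouwer to a Euclidean projection of an improvement step, then let $\varepsilon\to0$. (The paper projects $\bN+\varepsilon\Phi(\bN)$ with $\Phi_j=V_jN^j$ rather than $\bN+V(\bN)$; this changes nothing essential.)

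\textbf{The step $\mu=0$.} A fixed point of a projected map only yields the variational inequality, i.e.\ $V_j(\bN^*)=\mu c_j$ on active coordinates, with a budget multiplier $\mu\ge0$. Your observation that $V_j<0$ when $c_jN^j$ is large controls only coordinates that are individually large. It says nothing about points on the budget face where every coordinate is moderate. Your two remedies, enlarging $B$ or absorbing $\mu$ into $c_j$, change the hypotheses or the equilibrium concept rather than prove the stated theorem.

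In fact no argument can close this step under Assumptions~\ref{ass:budget}--\ref{ass:utility} alone. Take log utility and all $\gamma_{jk}=0$. Then \eqref{eq:Vj-explicit} collapses to $V_j(\bN)=w_j\eta_j/N^j-c_j$, whatever the $F_j$. So $V_j\to+\infty$ as $N^j\to0^+$, and no family can be inactive. The only zero of $V$ is $N^j=w_j\eta_j/c_j$, which costs $\sum_jw_j\eta_j<1$. If $B<\sum_jw_j\eta_j$, there is no Hive Equilibrium in $\Omega$. The fixed point of your map on $\Omega_\varepsilon$ (for small $\varepsilon$) is $N^j=Bw_j\eta_j/(c_j\sum_kw_k\eta_k)$. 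It lies on the budget face with $\mu=\sum_kw_k\eta_k/B-1>0$.

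The paper's proof has the same hole: it reads $\Phi(\bN^*)=0$ off the projected fixed point without accounting for the budget face. So you have diagnosed the problem more sharply than the paper does. What is actually needed is one of two things:
\begin{itemize}
\item an additional hypothesis making the budget slack at the zero set of $V$; or
\item a weakened (HE-2) of the form $V_j(\bN^*)=\mu c_j$ with $\mu\ge0$ and $\mu\bigl(B-\sum_jc_jN^{j*}\bigr)=0$.
\end{itemize}
For the second version, your variational-inequality argument is exactly the right tool.

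\textbf{The boundary step.} This is also left open. The uniform lower bound $\underline N$ is only announced, and it is nontrivial once $\sigma\neq1$. Then $V_j=w_j\eta_jY_j^{1-\sigma}/N^j-c_j$, and $Y_j^{1-\sigma}$ can vanish as other populations go to zero: through $\Gamma_j\to\infty$ when $\sigma>1$, or $\Gamma_j\to0$ when $\sigma<1$. So $V_j$ need not blow up when $N^j$ and some $N^k$ shrink together.

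Your fallback does not rescue existence either. In exactly the regime your bootstrap targets, $V_j\to+\infty$ as $N^j\to0^+$, so a limit point with $N^{j*}=0$ violates (HE-2) rather than satisfying it. Outside that regime, $V$ has no continuous extension to $\partial\RR^S_+$, so nothing forces $\lim V_j(\bN_\varepsilon)\le0$. An ``upper semicontinuous extension'' cannot be chosen after the fact to make this hold. The paper's limit step asserts ``$V_j(\bN^*)\le0$ at boundary components'' with no more justification, so this defect is shared with the paper rather than resolved by it.
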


\begin{proof}
For $\varepsilon > 0$ sufficiently small, define the truncated feasible
set $\Omega_\varepsilon = \{\bN \in \Omega : N^j \ge \varepsilon\;
\forall j\}$, which is compact and convex.

Define the map $\Psi_\varepsilon : \Omega_\varepsilon \to
\Omega_\varepsilon$ by:
\[
  \Psi_\varepsilon(\bN)
  \;=\; \mathrm{proj}_{\Omega_\varepsilon}\!\bigl(\bN + \varepsilon\,
  \Phi(\bN)\bigr),
\]
where $\mathrm{proj}_{\Omega_\varepsilon}$ is the Euclidean projection
onto~$\Omega_\varepsilon$.

Under our assumptions, $V_j(\bN)$ is continuous on~$\Omega_\varepsilon$
(composition of $C^2$ functions), so $\Phi$ is continuous.  The
projection onto a closed convex set is continuous.  Hence
$\Psi_\varepsilon$ is a continuous map from the compact convex set
$\Omega_\varepsilon$ into itself.

By Brouwer's fixed-point theorem, $\Psi_\varepsilon$ has a fixed point
$\bN^*_\varepsilon$.

Taking a convergent subsequence as $\varepsilon \to 0$ (by compactness
of~$\Omega$), the limit $\bN^* \in \Omega$ satisfies $\Phi(\bN^*) = 0$
at interior components and $V_j(\bN^*) \le 0$ at boundary components
($N^{j*} = 0$).

The allocation $\bK^*$ and prices $\blam^*$ are obtained from the inner
problem at~$\bN^*$.
\end{proof}

\subsection{Pareto optimality}

\begin{theorem}[First Welfare Theorem for Hives]\label{thm:pareto}
Under Assumptions~\ref{ass:budget}--\ref{ass:weak-ext}, every Hive
Equilibrium $(\bN^*, \bK^*, \blam^*)$ is \emph{Pareto-optimal}: there
exists no feasible $(\bN', \bK')$ with $W(\bK', \bN') > W(\bK^*, \bN^*)$.
\end{theorem}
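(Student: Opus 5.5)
The plan is to reduce the statement to a global optimization problem for the value function $W^*(\bN)$ on $\Omega$, and to use the concavity provided by Assumption~\ref{ass:weak-ext}. For any feasible $(\bN',\bK')$, meaning $\bN'\in\Omega$ and $\sum_j K'^m_j\le R^m$, the definition of the inner problem~\eqref{eq:inner} gives $W(\bK',\bN')\le W^*(\bN')$. By (HE-1) we also have $W(\bK^*,\bN^*)=W^*(\bN^*)$. It therefore suffices to show that $\bN^*$ maximizes $W^*$ over $\Omega$. The comparison is thus split into an inner step, which is immediate from optimality of $\bK^*$, and an outer step over populations.

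\textbf{Concavity of $W^*$.} For the outer step, I would first establish that $W^*$ is concave on $\Omega$, at least on the open set where all $N^j>0$. The remark after Assumption~\ref{ass:weak-ext} sketches why this holds. I would make it precise by computing the Hessian of $W^*$ in the log-coordinates $x_j=\ln N^j$ in a neighborhood of a point, using the envelope formula~\eqref{eq:Vj-explicit} for the gradient $V_j+c_j$.

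The Hessian then splits into two parts:
\begin{itemize}
\item a diagonal part of order $\eta_j-1$, together with the curvature $-\sigma\eta_j$ coming from $u''$ (this is where the denominator $1+\sigma\eta_{\max}$ comes from);
\item off-diagonal terms of order $\gamma_{jk}$, together with cross terms from reallocating $\bK^*$.
\end{itemize}
The cross terms from reallocation are negative semidefinite, since $\bK\mapsto W$ is concave and maximizing a jointly concave function preserves concavity. The bound~\eqref{eq:weak-ext} should then give strict diagonal dominance via Gershgorin, so the Hessian is negative definite.

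I expect this step to be the main obstacle. $W$ is not jointly concave in $(\bK,\bN)$ in general, because the multiplicative Cobb--Douglas externality structure enters through $u$. So I would try to show joint concavity of $(\bK,\bN)\mapsto \sum_j w_j u(Y_j)$ in the variables $(\bK,\ln\bN)$ under~\eqref{eq:weak-ext}. If that fails, I would fall back on the paper's stated claim in the remark that $W^*$ is globally concave on $\Omega$, and take it as given.

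\textbf{Global optimality of $\bN^*$.} Given concavity, I would apply the standard first-order characterization for concave maximization over the convex set $\Omega$. By the envelope theorem, $\partial W^*/\partial N^j=V_j(\bN^*)$. Condition (HE-2) states that $V_j(\bN^*)=0$ whenever $N^{j*}>0$, and $V_j(\bN^*)\le 0$ whenever $N^{j*}=0$. Hence for every $\bN'\in\Omega$,
\[
\nabla W^*(\bN^*)\cdot(\bN'-\bN^*)=\sum_{j:N^{j*}=0}V_j(\bN^*)\,N'^j\le 0 .
\]
Concavity then yields $W^*(\bN')\le W^*(\bN^*)+\nabla W^*(\bN^*)\cdot(\bN'-\bN^*)\le W^*(\bN^*)$.

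Note that (HE-2) already accounts for the cost term $-\sum_j c_jN^j$, so the budget constraint needs no separate multiplier: $\bN^*$ is optimal even ignoring the budget, and a fortiori within $\Omega$.

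\textbf{Boundary populations.} A remaining technical point concerns faces with $N^{j*}=0$, where $(N^j)^{\eta_j-1}$ blows up or $\Gamma_j$ may be singular if some $\gamma_{jk}<0$. There I would extend the inequality to the boundary by continuity of $W^*$ along the segment from $\bN^*$ to $\bN'$, interpreting $V_j(\bN^*)$ as a one-sided directional derivative. Combining the outer step with the inner comparison gives $W(\bK',\bN')\le W^*(\bN')\le W^*(\bN^*)=W(\bK^*,\bN^*)$, so no feasible pair strictly improves welfare.
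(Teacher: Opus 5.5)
You follow the paper's route exactly: $W(\bK',\bN')\le W^*(\bN')$ and $W(\bK^*,\bN^*)=W^*(\bN^*)$, then concavity of $W^*$ from a Gershgorin bound under Assumption~\ref{ass:weak-ext}, then the first-order conditions. Your variational inequality is more careful than the paper's $\nabla_{\bN}W^*(\bN^*)=0$, which tacitly assumes an interior equilibrium.

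\textbf{The concavity step is not established.} This is the step you flag, and two of your ingredients for it are wrong.

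First, the reallocation term has the opposite sign. In coordinates on the binding resource constraints, $D^2_{\bN}W^*=W_{\bN\bN}-W_{\bN\bK}W_{\bK\bK}^{-1}W_{\bK\bN}$. Since $W_{\bK\bK}$ is negative definite, the correction is positive semidefinite, i.e.\ it works \emph{against} concavity. The partial-maximization argument needs joint concavity, which you concede fails.

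Second, \eqref{eq:Vj-explicit} is not the gradient of $W^*$. Because $N^j$ also enters $\Gamma_k$ for $k\neq j$, $\partial W^*/\partial N^j$ contains the extra term $\sum_{k\neq j}w_k u'(Y_k^*)\gamma_{kj}Y_k^*/N^j$. Differentiating \eqref{eq:Vj-explicit} instead gives a matrix that is in general not even symmetric. For $\sigma=1$ the inner problem decouples from $\bN$, and $W^*(\bN)=\mathrm{const}+\sum_k\beta_k\ln N^k-\sum_kc_kN^k$ with $\beta_k=w_k\eta_k+\sum_{j\neq k}w_j\gamma_{jk}$. By contrast, \eqref{eq:Vj-explicit} equals $w_k\eta_k/N^k-c_k$, whose zero is not a critical point of $W^*$ unless $\sum_{j\neq k}w_j\gamma_{jk}=0$. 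So your envelope step holds only if (HE-2) is read with $V_j=\mathrm{d}W^*/\mathrm{d}N^j$ in full. Under the explicit formula the statement is false even for log utility.

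\textbf{The fallback does not help.} Relying on the remark after Assumption~\ref{ass:weak-ext} fails because that claim is false, and the paper's Gershgorin sketch has the same hole. The bound \eqref{eq:weak-ext} controls the exponents $\gamma_{jk}$, not the weights $w_ju'(Y_j^*)Y_j^*$ multiplying them. Concretely:
\begin{itemize}
\item Take $\sigma\in(0,1)$ and one $\gamma_{jk}<0$ with $w_j>0$, small enough for \eqref{eq:weak-ext}.
\item Fix a feasible $\bK$ with $F_j(\bK_j)>0$, and let $N^k\to0^+$ with the other populations fixed.
\item Then $Y_j\to\infty$, and since $u\ge0$, $W^*\ge W(\bK,\cdot)\ge w_ju(Y_j)-B\to+\infty$ inside $\Omega$.
\end{itemize}
A concave function that is finite on the bounded open interior of $\Omega$ is bounded above there. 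So $W^*$ is not concave, and no interior Hive equilibrium of such a Hive satisfies the conclusion.

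\textbf{Your boundary step fails independently.} Where $W^*$ is not differentiable, one-sided partials $V_j\le0$ do not control the derivative along $\bN'-\bN^*$. Take $S=2$, $\sigma<1$, $w_1>0$ and $\gamma_{12},\gamma_{21}>0$. The extinction state $\bN^*=0$ has one-sided partials $-c_j$, because $Y_1=Y_2=0$ on each axis, so it passes your reading of (HE-2). Yet $W^*(s,s)\ge w_1u(Y_1)-(c_1+c_2)s$ with $u(Y_1)$ proportional to $s^{(1-\sigma)(\eta_1+\gamma_{12})}$. This exponent is below one, since $\eta_1+\gamma_{12}<1$ by \eqref{eq:weak-ext}. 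Hence $W^*(s,s)>0=W^*(0)$ for small $s$.

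\textbf{What does work.} Your log-coordinate idea succeeds, but only for $\sigma\ge1$.
\begin{itemize}
\item $\ln Y_j$ is the sum of a concave function of $\bK$ and an affine function of $\ln\bN$.
\item $t\mapsto u(e^t)$ is concave and nondecreasing.
\item $-c_je^{\ln N^j}$ is concave.
\end{itemize}
Hence $W$ is jointly concave in $(\bK,\ln\bN)$, and $W^*$ is concave in $\ln\bN$. An interior $\bN^*$ with $\mathrm{d}W^*/\mathrm{d}N^j=0$ for all $j$ is then a global maximizer on $\RR^S_{++}$. This uses neither $\eta_j<1$ nor \eqref{eq:weak-ext}, provided you run the first-order step in $\ln\bN$ rather than in $\bN$. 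For $\sigma<1$ this route fails already for $S=1$, where $u(Y_j)$ is convex in $\ln N^j$.
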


\begin{proof}
At a Hive Equilibrium, $\bK^*$ maximizes $W(\cdot, \bN^*)$ over
resource constraints (HE-1), and $\bN^*$ satisfies
$V_j(\bN^*) = \mathrm{d} W^*/\mathrm{d} N^j = 0$ for all active
families (HE-2).

Suppose for contradiction that there exists feasible
$(\bN', \bK')$ with $W(\bK', \bN') > W(\bK^*, \bN^*)$.

Since $W^*(\bN) = \max_\bK W(\bK, \bN)$, we have
$W^*(\bN') \ge W(\bK', \bN') > W(\bK^*, \bN^*) = W^*(\bN^*)$.

We show that $W^*$ is strictly concave in~$\bN$ on~$\Omega$.  The
Hessian $H = D^2_\bN W^*$ has diagonal entries
$H_{jj} = w_j\, [\, u''(Y_j^*)\, (\partial Y_j / \partial N^j)^2
+ u'(Y_j^*)\, \partial^2 Y_j / (\partial N^j)^2\,]$, which are negative
under Assumptions~\ref{ass:production}(iii) and~\ref{ass:utility}
(the dominant term is $\eta_j(\eta_j - 1)(N^j)^{\eta_j - 2} < 0$).
The off-diagonal entries $H_{jk}$ are bounded by a term proportional
to~$|\gamma_{jk}|$.  By a Gershgorin argument on~$H$,
Assumption~\ref{ass:weak-ext} ensures that every eigenvalue of~$H$
is strictly negative, so $W^*$ is strictly concave on~$\Omega$.

The condition $\nabla_\bN W^*(\bN^*) = 0$ (from HE-2) then implies
$\bN^*$ is the unique global maximum of~$W^*$ on~$\Omega$, contradicting
$W^*(\bN') > W^*(\bN^*)$.
\end{proof}

\begin{remark}
Pareto optimality relies on the orchestrator having \emph{full
information} about production functions and externalities.  If some
externalities are unobserved, the equilibrium may fail to be
Pareto-optimal, necessitating Pigouvian corrections
(see Section~\ref{sec:discussion}).
\end{remark}

\subsection{Multiplicity and indeterminacy}\label{sec:multiplicity}

We now relax Assumptions~\ref{ass:production}(iii) and~\ref{ass:weak-ext} to allow increasing
returns for some families.

\begin{assumption}[Strategic complementarity]\label{ass:complement}
There exist families $j, k$ with $\gamma_{jk} > 0$ and
$\gamma_{kj} > 0$ (mutual positive externalities).
\end{assumption}

\begin{assumption}[Local increasing returns]\label{ass:increasing}
There exists a family~$j_0$ with $\eta_{j_0} \ge 1$ (non-decreasing
returns within family).
\end{assumption}

\begin{theorem}[Multiplicity of equilibria]\label{thm:multiplicity}
Under Assumptions~\ref{ass:budget}--\ref{ass:utility},
\ref{ass:complement}, and~\ref{ass:increasing}, there exist open sets
of parameters $(\bw, \bR)$ for which at least two distinct Hive
Equilibria coexist.
\end{theorem}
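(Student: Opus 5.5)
The plan is to exhibit two distinct zeros of the demographic map $\Phi$: a corner equilibrium in which the increasing-returns family $j_0$ is inactive, and an interior equilibrium in which it is active. We then show that both configurations survive small perturbations of $(\bw,\bR)$. Multiplicity will come from the failure of concavity of $W^*$ in the direction $N^{j_0}$ once $\eta_{j_0}\ge 1$. It is reinforced by the mutual complementarity $\gamma_{jk},\gamma_{kj}>0$ of Assumption~\ref{ass:complement}, which makes the best-response populations of $j$ and $k$ upward sloping in each other. To keep the computation tractable we first reduce to a two-family subsystem $\{j,k\}$ with $j=j_0$, holding the remaining families at a fixed equilibrium configuration. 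This is legitimate: with the other populations fixed, they enter only through $\Gamma$ and through the resource constraint, and we can choose $(\bw,\bR)$ so that they stay at interior zeros of their own $V_\ell$.

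\textbf{Step 1 (a "low" equilibrium).} Look at \eqref{eq:Vj-explicit} near $N^{j_0}=0$. If $\eta_{j_0}>1$, then $(N^{j_0})^{\eta_{j_0}-1}\to 0$ as $N^{j_0}\to 0$, so $V_{j_0}\to -c_{j_0}<0$. Hence the boundary point with $N^{j_0}=0$ satisfies the free-entry inequality in \eqref{eq:steady}. With $N^{j_0}=0$ fixed, the remaining families have $\eta<1$, and the existence argument of Theorem~\ref{thm:existence} (or a direct monotonicity argument) gives a zero of $V_\ell$ for every other active family $\ell$. In the borderline case $\eta_{j_0}=1$, we instead make $w_{j_0}$ small so that $V_{j_0}(0,\cdot)<0$.

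\textbf{Step 2 (a "high" equilibrium).} Substitute the inner-problem solution and study the scalar function $g(N)=V_{j_0}(N,\hat N^{k}(N))$, where $\hat N^k(N)$ solves $V_k=0$. Complementarity gives $\hat N^k$ increasing in $N$. Then $g$ behaves like
\[
C\,N^{(\eta_{j_0}-1)+\gamma_{j_0k}\epsilon}\,u'(Y_{j_0}^*)-c_{j_0},
\]
where $\epsilon>0$ is the elasticity of $\hat N^k$. This expression is increasing for moderate $N$ and eventually decreasing, because the CRRA factor $u'(Y)=Y^{-\sigma}$ and the budget bound $\Omega$ cap growth. Since $g<0$ near $0$ and we can choose $w_{j_0}$ so that $\max g>0$, the intermediate value theorem gives two sign changes. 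The upper crossing, where $g$ is decreasing, is the interior "high" equilibrium, and the middle crossing is a third, unstable one.

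\textbf{Step 3 (openness).} We have two regular zeros of $\Phi$ (or a strict corner together with a regular zero) at a given parameter $(\bw_0,\bR_0)$. By the implicit function theorem applied to $\Phi$ (which is $C^1$ since the inner solution $\bK^*(\bN)$ is smooth by the strict concavity of Assumption~\ref{ass:production}), both zeros persist on a neighborhood of $(\bw_0,\bR_0)$, and the strict inequality at the corner persists by continuity. The main obstacle is the nondegeneracy needed in Step 3, namely $g'\neq 0$ at the crossings and a nonsingular Jacobian of the full system. I would first try to secure it by choosing $w_{j_0}$ so that $\max g$ is strictly positive, then perturbing $c_{j_0}$ slightly (Sard's theorem) so that the crossings are transversal. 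A further concern is that the envelope-based expression for $V_{j_0}$ requires $\bK^*$ to react smoothly to $N^{j_0}$. I would verify this by checking that the bordered Hessian of the inner problem is nonsingular under the CES form \eqref{eq:ces}.
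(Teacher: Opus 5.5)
The decisive step fails. In Step~1, and in the claim ``$g<0$ near $0$'' of Step~2, you treat the factor $u'(Y^*_{j_0})$ in \eqref{eq:Vj-explicit} as bounded. But as $N^{j_0}\to 0$ one has $Y^*_{j_0}\to 0$, so $u'(Y^*_{j_0})=(Y^*_{j_0})^{-\sigma}\to\infty$. Since \eqref{eq:production} gives $\partial Y_j/\partial N^j=\eta_j Y_j/N^j$, the marginal value is exactly
\[
  V_j(\bN)\;=\;w_j\,\eta_j\,\frac{(Y^*_j)^{1-\sigma}}{N^j}\;-\;c_j .
\]
What governs the corner is therefore the curvature of $N^{j_0}\mapsto u(Y_{j_0})$, not $\eta_{j_0}$ compared with $1$.
\begin{itemize}
  \item For $\sigma>1$, $V_{j_0}\to+\infty$ as $N^{j_0}\to0$, so your corner violates free entry (HE-2).
  \item For $\sigma=1$, $V_j=w_j\eta_j/N^j-c_j$ depends on $N^j$ alone. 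Then (HE-2) forces $N^{j*}=w_j\eta_j/c_j$ (for $w_j>0$) for every $(\bw,\bR)$, so no argument can produce two equilibria, whatever $\eta_{j_0}$ is.
  \item For $\sigma<1$, the limit is $-c_{j_0}$ only if $(1-\sigma)$ times the effective elasticity of $Y^*_{j_0}$ in $N^{j_0}$ exceeds one; this elasticity includes the induced shrinkage of $K^*_{j_0}$. For the degree-one CES \eqref{eq:ces}, with $\blam$ and $\Gamma_{j_0}$ held fixed, the condition reads $\eta_{j_0}>\sigma/(1-\sigma)$. This neither implies nor is implied by $\eta_{j_0}\ge1$.
\end{itemize}
Your borderline device (small $w_{j_0}$ when $\eta_{j_0}=1$) cannot help: the sign of the limit is set by an exponent, not by the size of $w_{j_0}$.

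Even for $\sigma<1$ with that exponent condition satisfied, the corner you describe is not an equilibrium. With $\gamma_{kj_0}>0$, setting $N^{j_0}=0$ gives $\Gamma_k=0$, hence $Y^*_k=0$ and $V_k\equiv-c_k$. So the partner $k$ cannot sit at an interior zero of $V_k$. The actual ``low'' state is a joint extinction of $j_0$ and $k$, which in the two-family case satisfies (HE-2) for $\sigma<1$ irrespective of $\eta_{j_0}$. It comes from the multiplicative externality vanishing, not from Assumption~\ref{ass:increasing}.

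Step~2 has the same $\sigma$-dependence. The direct effect of $N^{j_0}$ on $V_k$ runs through $(Y^*_k)^{1-\sigma}\propto(N^{j_0})^{(1-\sigma)\gamma_{kj_0}}$. Complementarity therefore raises $\hat N^k$ only when $\sigma<1$, has no effect at $\sigma=1$, and lowers it when $\sigma>1$. The budget set $\Omega$ does not enter $V_j$, so it cannot create the downturn of $g$; if $g$ is still positive at the budget frontier, there is simply no equilibrium there. Freezing the other families is also illegitimate, since their zeros move with $(N^{j_0},N^k)$ whenever $\sigma\neq1$.

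The paper's sketch takes a different route (two interior crossings of $\mathcal{C}_1$ and $\mathcal{C}_2$) but rests on the same premise that $\eta_1>1$ makes $V_1$ rise in $N^1$. Its log-utility example falls under the $\sigma=1$ computation above, where $V_j=0$ forces $\bN^*=(0.6,0.4)$. So it offers no repair.

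A working argument must fix $\sigma\neq1$ and use the exact formula. For example, in the appendix's $S=2$, $M=1$ Cobb--Douglas setting, combining $V_j=0$ with the first-order condition gives $K^*_j=\alpha_jc_jN^j/(\eta_j\lambda)$, where $\lambda=\sum_i(\alpha_ic_i/\eta_i)N^i/R$. In $x_j=\ln N^j$, both conditions are then affine in $x$ apart from the terms $-(1-\sigma)\alpha_j\ln\lambda$. Eliminating $\ln\lambda$ leaves a scalar equation on a line: affine minus $(1-\sigma)\alpha_1$ times a convex log-sum-exp. That equation can have two transversal zeros for suitable parameters, and only then does your implicit-function openness step apply.
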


\begin{proof}[Proof sketch]
Consider the simplest non-trivial case: $S = 2$ families with mutual
complementarity ($\gamma_{12}, \gamma_{21} > 0$) and increasing returns
for family~1 ($\eta_1 \ge 1$).

At a Hive Equilibrium, the active families satisfy $V_j(\bN^*) = 0$,
which defines two curves in $(N^1, N^2)$-space:
\[
  \mathcal{C}_1 = \{(N^1, N^2) : V_1(N^1, N^2) = 0\},
  \qquad
  \mathcal{C}_2 = \{(N^1, N^2) : V_2(N^1, N^2) = 0\}.
\]

With increasing returns ($\eta_1 \ge 1$), $\mathcal{C}_1$ is
non-monotone: $V_1$ increases with $N^1$ for small~$N^1$ (increasing
returns dominate) and decreases for large~$N^1$ (resource scarcity
dominates).  With positive externalities ($\gamma_{12} > 0$), $V_1$
increases with~$N^2$, tilting $\mathcal{C}_1$ in $(N^1, N^2)$-space.

For sufficiently strong complementarities, $\mathcal{C}_1$ and
$\mathcal{C}_2$ intersect at least twice, yielding at least two
interior equilibria with distinct population structures.

Full details, including the computation of the crossing conditions,
are given in Appendix~\ref{app:proofs}.
\end{proof}

\begin{remark}[Interpretation]
The multiple equilibria correspond to distinct ``morphologies'' of the
Hive:
\begin{itemize}
  \item \emph{Exploitation equilibrium}: high specialization,
    concentrated populations, high efficiency;
  \item \emph{Exploration equilibrium}: high diversity,
    distributed populations, robustness to perturbation.
\end{itemize}
The system can settle into either morphology depending on initial
conditions and historical path---a form of \emph{path dependence}
directly analogous to the indeterminacy in multi-sector growth
models~\cite{benhabib1985,garnier2013}.
\end{remark}

\subsection{Stolper--Samuelson analog}

\begin{theorem}[Hive Stolper--Samuelson]\label{thm:stolper}
At an interior Hive Equilibrium, the Jacobian matrix
\begin{equation}\label{eq:SS}
  \mathbf{SS} \;=\; \frac{\partial \blam^*}{\partial \bw}
  \;\in\; \RR^{M \times S}
\end{equation}
satisfies a \emph{magnification effect}: if preference $w_j$ increases
(ceteris paribus), the shadow price $\lambda^{m*}$ of the resource most
intensively used by family~$j$ increases proportionally more than the
preference change:
\begin{equation}\label{eq:SS-magnification}
  \frac{\hat{\lambda}^{m*}}{\hat{w}_j} > 1
  \qquad
  \text{for the resource~$m$ used most intensively by family~$j$},
\end{equation}
where $\hat{x} = \mathrm{d}x / x$ denotes a proportional change.
\end{theorem}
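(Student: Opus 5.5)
The plan is to follow the classical Jones (1965) ``hat calculus'' route, adapted to the orchestrator's first-order conditions~\eqref{eq:foc}. I will work at an interior Hive Equilibrium in the square case $S = M$, so that the relevant matrices are invertible. The comparative static is taken holding $\bN^*$ fixed, which is the short-run Stolper--Samuelson experiment.

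\textbf{Step 1: a cost-function representation.} Define the value of family~$j$'s marginal product as $p_j := w_j\, u'(Y_j^*)\, A_j (N^j)^{\eta_j}\Gamma_j$. Conditions~\eqref{eq:foc} then say $p_j\, \partial F_j/\partial K^m_j = \lambda^{m*}$ for all $m$. Because $F_j$ is CES and hence linearly homogeneous, this gives the zero-profit identity
\[
  p_j F_j(\bK_j^*) = \sum_m \lambda^{m*} K^{m*}_j .
\]
Dividing by $F_j$, we get $p_j = c^F_j(\blam^*)$, where $c^F_j$ is the unit cost function dual to $F_j$.

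\textbf{Step 2: log-differentiate.} Let $\theta_{jm} = \lambda^{m*}K^{m*}_j / \sum_{m'} \lambda^{m'*} K^{m'*}_j$ denote the cost share of resource $m$ in family $j$. Shephard's lemma (envelope) then gives $\hat p_j = \sum_m \theta_{jm}\hat\lambda^{m*}$, i.e.\ $\hat{\mathbf p} = \Theta\,\hat\blam$ with $\Theta = (\theta_{jm})$ row-stochastic. From the definition of $p_j$,
\[
  \hat p_j = \hat w_j - \sigma \hat Y_j,
\]
and with $\bN^*$ fixed, $\hat Y_j = \sum_m \theta_{jm}\hat K^m_j$. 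The resource constraints, binding by Inada, give $\sum_j (K^{m*}_j/R^m)\hat K^m_j = 0$. So the second step is to close the system: express $\hat{\bK}$ in terms of $\hat\blam$ and $\hat{\mathbf p}$ via the CES elasticities, $\hat K^m_j - \hat F_j = -\rho_j(\hat\lambda^m - \hat p_j)$, and solve for $\hat\blam$ as a function of $\hat\bw$. In the benchmark $\sigma\to 0$ (or whenever output adjustment is second order), this reduces to $\hat\blam = \Theta^{-1}\hat\bw$ with $\hat w_k = 0$ for $k\ne j$.

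\textbf{Step 3: magnification.} Since $\Theta$ is row-stochastic and positive, each $\hat p_k$ is a convex combination of the $\hat\lambda^m$. With $\hat p_j >0$ and $\hat p_k = 0$ for $k\neq j$, the $\hat\lambda$'s cannot all lie in $[0,\hat p_j]$ unless they are all equal. Under a strict factor-intensity ranking (each family uses a distinct resource most intensively, i.e.\ $\Theta$ has a dominant diagonal after suitable ordering), the standard Jones argument shows that the resource $m$ used most intensively by $j$ satisfies $\hat\lambda^{m*} > \hat p_j$. For $S=M=2$ this is explicit from $\Theta^{-1}$, whose diagonal entries exceed $1$ and whose off-diagonal entries are negative. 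For general $S$ I would invoke the Chipman/Jones--Scheinkman result that the inverse of a diagonally dominant stochastic matrix has diagonal entries $>1$.

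\textbf{Main obstacle.} The hard step is passing from $\hat p_j$ to $\hat w_j$. The CRRA feedback $-\sigma\hat Y_j$ dampens the price response, so that in general $\hat p_j < \hat w_j$. Magnification must therefore survive this attenuation: I need $(\Theta^{-1})_{jm}\cdot \hat p_j/\hat w_j > 1$. I would first try to bound $\hat p_j/\hat w_j$ from below by $1/(1+\sigma\,\cdot)$ using the resource-market closure, and then impose a sufficiently strong intensity condition (an explicit lower bound on $(\Theta^{-1})_{jm}$) as part of the hypothesis. Failing that, I would state the result with $\hat p_j$ in place of $\hat w_j$, i.e.\ in ``effective preference'' terms, which is the clean analog of the classical theorem.
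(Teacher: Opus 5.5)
You follow the same route as the paper. Its appendix also log-differentiates \eqref{eq:foc} in Jones's hat algebra, uses the same share matrix $\boldsymbol{\theta}$ (your $\Theta$; the two normalizations agree by zero profit), and reads magnification off the sign pattern of the inverse under a strong-intensity condition. Where you differ is that you notice the term $\hat u'_j=-\sigma\hat Y_j$. The paper silently drops it when it asserts $\hat{\blam}=\boldsymbol{\theta}^{-T}\hat{\bw}$ plus ``externality corrections'', which is exactly your $\sigma\to 0$ benchmark and is excluded by $\sigma>0$.

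So the step you call the main obstacle is the entire content of the theorem, and it cannot be carried out: under the stated hypotheses the claim is false. With log utility (the paper's own numerical setting), \eqref{eq:foc} reads $w_j\,\partial\ln F_j/\partial K^m_j=\lambda^{m}$. Hence $\blam^*$ does not depend on $\bN$, and it does not matter whether $\bN^*$ is held fixed or re-equilibrates. In the Cobb--Douglas limit this gives $\lambda^{m*}=\sum_k w_k\alpha_{km}/R^m$, so $\hat\lambda^{m*}/\hat w_j=w_j\alpha_{jm}/\sum_k w_k\alpha_{km}<1$ for every $m$. For genuine CES, take the symmetric $2\times 2$ base point with $\theta_{11}=\theta_{22}=\theta>\tfrac12$, $d=2\theta-1$, and common $\rho$. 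Solving the full system (your price equations, the CES demands $\hat K^m_j-\hat F_j=-\rho(\hat\lambda^m-\hat p_j)$, and the resource constraints) gives
\[
\frac{\hat\lambda^{1*}}{\hat w_1}=\frac12\Bigl[1+\frac{d}{d^2+\sigma\rho\,(1-d^2)}\Bigr],
\]
which exceeds $1$ if and only if $\theta\,(1-\sigma\rho)>\tfrac12$. For example, $\sigma=1$, $\rho=0.8$, $\theta=0.8$ gives about $0.84$, with or without re-equilibration of $\bN$.

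Your proposed repair also targets the wrong inequality. In the $2\times2$ case, when only $w_1$ rises, family~2 loses resources, so $\hat p_2=-\sigma\hat Y_2>0$. This enters $\hat\lambda^{1}=(\Theta^{-1})_{11}\hat p_1+(\Theta^{-1})_{12}\hat p_2$ with $(\Theta^{-1})_{12}<0$, attenuating the effect further. A lower bound on $\hat p_1/\hat w_1$ therefore cannot suffice; the joint system must be solved as above. Another way to see it: for fixed $\bN$, $\blam^*$ is homogeneous of degree one in $\bw$, so the elasticities $\partial\ln\lambda^{m*}/\partial\ln w_k$ sum to $1$ over $k$. Magnification in $w_j$ thus forces the cross-elasticities with respect to the other weights to sum to a negative number. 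Strong curvature $\sigma\rho_j$ suppresses exactly these negative cross-effects; in the log/Cobb--Douglas case they are all positive.

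A true statement must therefore take one of your fallback forms:
\begin{itemize}
\item an added hypothesis of the type $\theta_{jj}(1-\sigma\rho_j)>\tfrac12$ (exact in the symmetric case); or
\item the effective-price version $\hat{\mathbf p}=\hat p_j e_j\Rightarrow\hat\lambda^{m*}>\hat p_j$, bearing in mind that such a $\hat{\mathbf p}$ needs compensating changes in all of $\bw$, not a ceteris-paribus move of $w_j$.
\end{itemize}

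Finally, in Step~3, ``most intensive'' and ``dominant diagonal'' are not interchangeable for $S\ge 3$. The positive stochastic matrix with rows $(0.40,0.21,0.39)$, $(0.02,0.50,0.48)$, $(0.32,0.32,0.36)$ has each row's largest share on the diagonal, yet $(\Theta^{-1})_{11}\approx-1.42$. You need $\theta_{jj}>\tfrac12$ for all $j$. Under that condition, $(\Theta^{-1})_{jj}>1$ follows from a short max/min argument on the rows of $\Theta x=e_j$. Jones--Scheinkman, by contrast, only guarantees that \emph{some} resource is a natural friend.
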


\begin{proof}[Proof sketch]
Differentiate the first-order conditions~\eqref{eq:foc} totally with
respect to~$\bw$.  The resulting linear system, combined with the
resource constraints $\sum_j K^{m*}_j = R^m$ and the CES structure
of~$F_j$, yields the Stolper--Samuelson matrix whose structure
is inherited from the factor intensity matrix
$\boldsymbol{\theta} = (\theta_{jm})$ with
$\theta_{jm} = \lambda^{m*} K^{m*}_j / (w_j\, u'(Y^*_j)\, Y^*_j)$.

The magnification effect follows from the algebraic structure of
$\boldsymbol{\theta}^{-1}$ when~$S = M$, and from generalized
Stolper--Samuelson results \cite{jones1965,ethier1974} when~$S \neq M$.
See Appendix~\ref{app:proofs}.
\end{proof}

\begin{remark}[Operational interpretation]
Suppose the system administrator increases the weight on quality
($w_{\mathrm{qual}} \uparrow$).  The SS matrix predicts which
resources will become scarce (e.g., GPU if quality-intensive families
are GPU-heavy) \emph{before the change is applied}.  This is a
\emph{predictive governance} tool: the operator can anticipate
bottlenecks and provision resources accordingly.
\end{remark}

\subsection{Rybczynski analog}

\begin{theorem}[Hive Rybczynski]\label{thm:rybczynski}
At an interior Hive Equilibrium, the Jacobian matrix
\begin{equation}\label{eq:RB}
  \mathbf{RB} \;=\; \frac{\partial \bN^*}{\partial \bR}
  \;\in\; \RR^{S \times M}
\end{equation}
satisfies a \emph{magnification effect}: if endowment $R^m$ increases
(ceteris paribus), the equilibrium population of the family that uses
resource~$m$ most intensively expands proportionally more than the
endowment change, while other families may contract:
\begin{equation}\label{eq:RB-magnification}
  \frac{\hat{N}^{j*}}{\hat{R}^m}
  \begin{cases}
    > 1 & \text{if family~$j$ uses~$m$ most intensively,} \\
    < 0 & \text{for some families~$j' \neq j$.}
  \end{cases}
\end{equation}
\end{theorem}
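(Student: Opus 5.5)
The plan is to mirror the proof sketch of Theorem~\ref{thm:stolper} and treat the Rybczynski matrix as the dual (transpose-like) object of the Stolper--Samuelson matrix. I work at an interior Hive Equilibrium $(\bN^*,\bK^*,\blam^*)$ where all resource constraints bind, so $\sum_j K^{m*}_j = R^m$ for every $m$. I restrict first to the square case $S=M$ and defer $S\neq M$ to the generalized results of \cite{jones1965,ethier1974}, as the paper does for Stolper--Samuelson.

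\textbf{Step 1: total differentiation.} I totally differentiate three sets of conditions with respect to $\bR$: the first-order conditions~\eqref{eq:foc}, the demographic conditions $V_j(\bN^*)=0$ from~\eqref{eq:Vj-explicit}, and the binding resource constraints. Writing everything in proportional changes $\hat{x}$, the resource constraints become
\[
\hat{R}^m=\sum_j \kappa_{jm}\,\hat{K}^{m*}_j,
\qquad
\kappa_{jm}=K^{m*}_j/R^m ,
\]
where $\kappa_{jm}$ is the allocation share of resource $m$ going to family~$j$. The CES structure of $F_j$ lets me express $\hat{K}^{m*}_j-\hat{K}^{m'*}_j$ through $\rho_j(\hat\lambda^{m'}-\hat\lambda^{m})$. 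The condition $V_j=0$ ties $\hat N^{j*}$ to $\hat Y^*_j$ and $\hat F_j$, with coefficient $1-\eta_j$ corrected by the $\sigma$ and $\gamma_{jk}$ terms.

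\textbf{Step 2: isolate the quantity effect.} The key simplification is to hold shadow prices fixed ($\hat\blam=0$). Because $\bw$ is unchanged, I expect the Stolper--Samuelson block to give $\hat\blam=0$ to first order when $S=M$, as in the classical case where factor prices are pinned down by goods prices. With fixed input ratios, each family's resource use scales with its activity level, so $\hat{\bR}=\boldsymbol{\kappa}^\top \hat{\mathbf{X}}$, where $\hat X_j$ is the proportional change in family $j$'s scale. Hence $\hat{\mathbf{X}}=(\boldsymbol{\kappa}^\top)^{-1}\hat{\bR}$. Since each row of $\boldsymbol{\kappa}^\top$ is a convex combination, the Jones algebra gives the magnification pattern: the diagonal entry for the intensive user exceeds $1$, and some other entry in that column is negative. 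Finally I map $\hat X_j$ to $\hat N^{j*}$ through the Step~1 relation $V_j=0$. Under Assumption~\ref{ass:weak-ext} this map is close to an increasing diagonal map, so signs and the magnification are preserved. The RB matrix is then $\diag(\bN^*)$ times this product times $\diag(\bR)^{-1}$.

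\textbf{Main obstacle.} The hardest part is justifying $\hat\blam=0$ and the monotone link from $X_j$ to $N^j$. Unlike the classical model, here $\bR$ also enters prices through $u'(Y^*_j)$ and through the externalities $\Gamma_j$, so prices need not be independent of endowments. My first attempt is a perturbation argument: establish the result exactly in the benchmark $\Gamma=0$, $\sigma\to 0$, where the structure is Heckscher--Ohlin-like. Then extend it by continuity of the strict inequalities, using the implicit function theorem applied to the nonsingular Jacobian guaranteed by the Gershgorin bound in the proof of Theorem~\ref{thm:pareto}. Under this route the result would hold on an open neighbourhood of that benchmark.
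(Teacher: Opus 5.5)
\textbf{The gap is in Step~2, and it cannot be closed: the statement is false under the paper's own assumptions.} Your premise $\hat{\blam}=0$ has no basis in this model. The quantities that play the role of goods prices in \eqref{eq:foc} are $w_j\,u'(Y_j^*)\,A_j\,(N^j)^{\eta_j}\Gamma_j$, not $w_j$, and they move with $\bR$. For instance, with $\sigma=1$ and Cobb--Douglas $F_j$ one gets $\lambda^{m*}=\sum_k w_k\alpha_{km}/R^m$, so $\hat\lambda^{m*}=-\hat R^m$. Nor is $N^j$ the scale of family $j$'s resource use; it is a separate input with linear cost. The link from output to population is governed by $\sigma$, not by Assumption~\ref{ass:weak-ext}. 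With $\Gamma_j\equiv 1$, the condition $V_j=0$ in \eqref{eq:Vj-explicit} reads $w_j\eta_j(A_jF_j)^{1-\sigma}(N^j)^{\eta_j(1-\sigma)-1}=c_j$. Hence $\hat N^{j*}=\frac{1-\sigma}{1-\eta_j(1-\sigma)}\,\hat F_j$, whose coefficient is below one for $\sigma>\eta_j/(1+\eta_j)$, zero at $\sigma=1$, and negative for $\sigma>1$.

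At $\sigma=1$ the conclusion collapses entirely. $W$ splits into $\sum_j w_j\ln F_j(\bK_j)$ plus a function of $\bN$ alone, so $\bK^*$ does not depend on $\bN$, $V_j=w_j\eta_j/N^j-c_j$, and $N^{j*}=w_j\eta_j/c_j$ for every $\bR$. This survives externalities, and survives replacing $V_j$ by the true $\mathrm{d}W^*/\mathrm{d}N^j$, which only adds $\sum_{k\neq j}w_k\gamma_{kj}$ to the numerator. Thus $\mathbf{RB}=0$ at an interior equilibrium satisfying every stated assumption. This is the log-utility case used throughout Section~\ref{sec:numerical}, so the numerical Rybczynski ``confirmation'' there is inconsistent with the model. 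The paper's own sketch has the same hole. Its formula $-(D_{\bN}\Phi)^{-1}D_{\bR}\Phi$ is the right starting point, but here $D_{\bR}\Phi=0$. The step ``the magnification follows from the factor intensity structure'' is never carried out, and Appendix~\ref{app:proofs} contains no Rybczynski part.

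\textbf{Your perturbation fallback is the right instinct, but it proves a different, conditional theorem, and its benchmark is not Heckscher--Ohlin-like as you assume.} Even at $\sigma\to 0$ with all $\gamma_{jk}=0$, the effective prices $p_j=w_jA_j(N^j)^{\eta_j}$ rise with $N^j$. They feed back through the supply response along the transformation frontier. For $S=M=2$ one finds $\hat{\bN}=D^{-1}B\,\hat{\bR}$, where $B$ is the classical Rybczynski matrix, $D=\diag(1-\eta_j)-\boldsymbol{\epsilon}\,\diag(\eta_j)$, and $\boldsymbol{\epsilon}$ is the matrix of output supply elasticities. Since $-D^2_{\bN}W^*=\diag(c_j)\,D\,\diag(1/N^{j*})$, magnification holds when $\bN^*$ is a strict local maximum of $W^*$, and the intensive family can contract at a saddle. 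The Gershgorin bound in the proof of Theorem~\ref{thm:pareto} does not deliver this, because it ignores the curvature contributed by the reallocation $\partial\bK^*/\partial\bN$; your appeal to it for nonsingularity is therefore unsupported.

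For $S=M>2$ there is a further gap. Rows of $\boldsymbol{\kappa}^\top$ being convex combinations only makes each row of $(\boldsymbol{\kappa}^\top)^{-1}$ sum to one. That gives the family-indexed fact that each family has some resource elasticity above one and some below zero. The resource-indexed claim, that the most intensive user of $m$ has elasticity above one, needs an additional strong factor-intensity hypothesis. A correct version must therefore carry explicit hypotheses: small $\sigma$, small $\norm{E}_\infty$, second-order optimality at $\bN^*$, and strong factor intensity. It cannot hold at every interior equilibrium.
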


\begin{proof}[Proof sketch]
At steady state, $\Phi(\bN^*, \bR) = 0$.  By the implicit function
theorem (assuming $D_\bN \Phi$ is non-singular):
\[
  \frac{\partial \bN^*}{\partial \bR}
  \;=\; -\bigl(D_\bN \Phi\bigr)^{-1}\; D_\bR \Phi.
\]
The structure of $D_\bR \Phi$ reflects how resource changes affect
marginal values~$V_j$ through the allocation~$\bK^*(\bN, \bR)$.  The
magnification effect follows from the factor intensity structure, as in
the classical Rybczynski theorem.  See Appendix~\ref{app:proofs}.
\end{proof}

\begin{remark}[Operational interpretation]
Suppose a second GPU cluster is added to the system ($R^{\mathrm{gpu}}
\uparrow$).  The RB matrix predicts that GPU-intensive families
(e.g., generation, transformation) will \emph{proliferate}, while
CPU-light families (e.g., logging, monitoring) may \emph{decline}---not
because they are less useful, but because the system re-optimizes the
allocation.  This is a \emph{capacity planning theorem}: the operator
can predict how the Hive will restructure before provisioning hardware.
\end{remark}

\subsection{Endogenous cycles}

\begin{theorem}[Endogenous demographic cycles]\label{thm:hopf}
Under Assumptions~\ref{ass:budget}--\ref{ass:utility} and
\ref{ass:complement}, suppose that:
\begin{enumerate}[label=(\roman*)]
  \item $S \ge 2$;
  \item The Jacobian $J = D_\bN \Phi(\bN^*)$ at an interior Hive
    Equilibrium~$\bN^*$ has a pair of complex conjugate eigenvalues
    $\alpha(p) \pm i\beta(p)$, where $p$ is a bifurcation parameter
    (e.g., an externality strength $\gamma_{jk}$);
  \item \textbf{Transversality:}
    $\mathrm{d}\alpha/\mathrm{d}p\big|_{p=p_0} \neq 0$ at the value
    $p_0$ where $\alpha(p_0) = 0$;
  \item All other eigenvalues of~$J$ have strictly negative real parts
    at~$p_0$.
\end{enumerate}
Then a \emph{Hopf bifurcation} occurs at $p = p_0$: for $p$ near $p_0$,
a family of periodic orbits (limit cycles) bifurcates from the
equilibrium~$\bN^*$.
\end{theorem}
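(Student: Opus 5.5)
The plan is to reduce the statement to the classical Hopf bifurcation theorem for smooth one-parameter families of vector fields, in its center-manifold form (Andronov--Hopf; see e.g.\ Guckenheimer--Holmes or Kuznetsov). The work splits into three parts: checking regularity of the family $p \mapsto \Phi(\cdot\,;p)$ near $\bN^*$, tracking the equilibrium as $p$ varies, and reducing to a two-dimensional center manifold where the planar Hopf theorem applies.

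\emph{Step 1: regularity.} I first verify that $\Phi(\bN;p) = V_j(\bN;p)\,N^j$ is at least $C^3$ jointly in $(\bN,p)$ on a neighborhood of $(\bN^*,p_0)$ in the interior of $\Omega$. By Assumptions~\ref{ass:production} and~\ref{ass:utility}, the inner problem~\eqref{eq:inner} has a unique interior solution characterized by~\eqref{eq:foc} together with the binding resource constraints. The Jacobian of this system in $(\bK,\blam)$ is a bordered Hessian. It is nonsingular because the objective is strictly concave in $\bK$ and the constraints are linear. The implicit function theorem then gives $\bK^*(\bN;p)$ and $\blam^*(\bN;p)$ with the same smoothness as the data. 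The data are smooth: CES, CRRA, power functions, and exponents $\gamma_{jk}$ entering smoothly, all evaluated at interior points with $N^j>0$ and $K^m_j>0$.

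Through~\eqref{eq:Vj-explicit}, $V_j$ and hence $\Phi$ inherit this smoothness. One point needs care: the paper's hypotheses only give $C^2$, while the Hopf theorem needs $C^3$ (or $C^4$) in order to compute the first Lyapunov coefficient. I will argue that the explicit functional forms are in fact real-analytic on the interior, so the stated $C^2$ assumption is harmless here.

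\emph{Step 2: continuation of the equilibrium.} Since $\alpha(p_0)=0$ but $\beta(p_0)\neq 0$, and by (iv) all other eigenvalues have negative real part, $0$ is not an eigenvalue of $J(p_0)$. The implicit function theorem applied to $\Phi(\bN;p)=0$ therefore gives a smooth branch $\bN^*(p)$ of interior Hive Equilibria for $p$ near $p_0$. Along this branch the eigenvalues $\alpha(p)\pm i\beta(p)$ depend smoothly on $p$, because they are simple at $p_0$.

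After the translation $\mathbf{x} = \bN - \bN^*(p)$, the system has a fixed equilibrium at the origin. The linearization at $p_0$ then has exactly two eigenvalues $\pm i\beta(p_0)$ on the imaginary axis and $S-2$ eigenvalues in the open left half-plane.

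\emph{Step 3: center-manifold reduction and the planar Hopf theorem.} I apply the center manifold theorem to the suspended system $(\dot{\mathbf{x}},\dot p) = (\Phi(\mathbf{x};p),0)$. This gives a locally invariant, attracting, three-dimensional $C^3$ manifold. Restricted to it, the dynamics form a planar family with eigenvalues $\alpha(p)\pm i\beta(p)$. Transversality (iii) supplies the crossing condition. The planar Hopf theorem, via the normal form $\dot z = (\alpha(p)+i\beta(p))z + c_1(p)z|z|^2 + O(|z|^5)$, then yields a one-parameter family of periodic orbits emanating from the equilibrium, with amplitude of order $\sqrt{|p-p_0|}$.

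\emph{Main obstacle: nondegeneracy.} The classical theorem requires the first Lyapunov coefficient $\ell_1(p_0)=\Repart c_1(p_0)$ to be nonzero to get the stated limit cycles on one side of $p_0$. The statement as given does not list this condition. My first approach is to invoke the version of the Hopf theorem that needs only the crossing condition, namely the Hopf/Alexander--Yorke global form, or the degree-theoretic version found in Hassard--Kazarinoff--Wan. That version guarantees a continuum of periodic orbits bifurcating from $(\bN^*(p_0),p_0)$ with no condition on $\ell_1$. This matches the phrase ``a family of periodic orbits'' in the statement.

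I will remark that if $\ell_1(p_0)\neq 0$, which is generically true in $(\bw,\bR,\Gamma)$, the orbits are isolated limit cycles. They are supercritical and orbitally stable when $\ell_1<0$, by (iv) and attractivity of the center manifold. For the strict ``limit cycle'' wording I will state this generic nondegeneracy explicitly as a hypothesis.

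Finally, I must check that the bifurcating orbits remain inside $\Omega$ and stay interior. This follows because they are small perturbations of the interior point $\bN^*(p_0)$.
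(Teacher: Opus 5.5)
Your proposal follows the paper's route. The paper's proof simply declares the result a direct application of the Hopf bifurcation theorem (Marsden--McCracken, Guckenheimer--Holmes), asserts that (i)--(iv) are exactly its hypotheses, and uses the appendix only to argue that these hypotheses can be met, which the conditional statement does not require. Your extra steps (analyticity of $\Phi$ in $(\bN,p)$, the continuation $\bN^*(p)$, the center-manifold reduction) correctly fill in what the paper leaves implicit, and your remark that (i)--(iv) give only a family of periodic orbits, with isolated limit cycles needing $\ell_1(p_0)\neq 0$, corrects the paper's claim rather than exposing a gap in your argument.
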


\begin{proof}
This is a direct application of the Hopf bifurcation theorem
\cite{marsden1976hopf,guckenheimer1983} to the system
$\dot{\bN} = \Phi(\bN; p)$.  Conditions (i)--(iv) are exactly the
hypotheses of the theorem.

It remains to verify that these conditions can be satisfied for some
parameter configuration.  This is demonstrated by construction in
Appendix~\ref{app:proofs}, where we exhibit a two-family example with
explicit parameter values yielding complex eigenvalues crossing the
imaginary axis.
\end{proof}

\begin{remark}[The ``seasons'' of the Hive]
The periodic orbits generated by the Hopf bifurcation correspond to
cyclical phases of the Hive's demographic structure:
\begin{itemize}
  \item \emph{Expansion phase}: births exceed deaths, population grows,
    new agents explore diverse tasks;
  \item \emph{Consolidation phase}: successful families specialize,
    efficiency increases, diversity decreases;
  \item \emph{Contraction phase}: deaths exceed births, underperforming
    agents are pruned, resources are freed;
  \item \emph{Renewal phase}: stem agents regenerate diversity, the
    cycle restarts.
\end{itemize}
These cycles are \emph{endogenous}---driven by the internal dynamics
of externalities and returns to scale, not by external shocks.  This is
the agent-demographic analog of the competitive equilibrium cycles
of Benhabib \& Nishimura~\cite{benhabib1985}.
\end{remark}

\subsection{Stability conditions}

\begin{theorem}[Local stability]\label{thm:stability}
An interior Hive Equilibrium~$\bN^*$ is \emph{locally asymptotically
stable} if and only if all eigenvalues of the Jacobian
$J = D_\bN \Phi(\bN^*)$ have strictly negative real parts.

A sufficient condition is:
\begin{equation}\label{eq:stability}
  \eta_{\max} < 1
  \qquad\text{and}\qquad
  \norm{E}_\infty \cdot \max_j N^{j*}
  \;<\; \min_j \bigl|V'_j(\bN^*)\bigr| \cdot N^{j*},
\end{equation}
where $\eta_{\max} = \max_j \eta_j$ and $\norm{E}_\infty$ is the
infinity norm of the externality matrix.
\end{theorem}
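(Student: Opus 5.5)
The statement has two parts, and I will treat them separately.

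\textbf{The equivalence.} This is the linearization theorem (Lyapunov's indirect method). I will first show that $\Phi$ is $C^1$ in a neighbourhood of an interior $\bN^*$. By Assumptions~\ref{ass:production} and~\ref{ass:utility}, the inner problem~\eqref{eq:inner} has a unique interior solution characterized by~\eqref{eq:foc}. The Inada conditions keep every $K^{m*}_j>0$, and strict concavity makes the bordered Hessian of the Lagrangian nonsingular. The implicit function theorem then gives $\bK^*(\bN)$ and $\blam^*(\bN)$ as $C^1$ functions. Consequently $V_j$ in~\eqref{eq:Vj-explicit}, and hence $\Phi_j = V_j N^j$, is $C^1$.

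For sufficiency, I will use Hartman--Grobman, or more directly a quadratic Lyapunov function $x^\top P x$ built from the Lyapunov equation $J^\top P + PJ = -I$. This gives local asymptotic stability when $\operatorname{spec}(J)$ lies in the open left half-plane.

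For the converse I must be honest about the hyperbolicity caveat. Suppose some eigenvalue has positive real part; then the unstable manifold theorem rules out stability. Suppose instead some eigenvalue lies on the imaginary axis; then the claim is a statement about generic (hyperbolic) equilibria. In that case I will restrict the ``only if'' direction to hyperbolic $\bN^*$, consistent with the Hopf setting of Theorem~\ref{thm:hopf}, where $\alpha(p_0)=0$ is precisely the non-hyperbolic boundary.

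\textbf{The sufficient condition.} At an interior equilibrium $V_j(\bN^*)=0$, so the product rule gives
\[
J_{jk} = \frac{\partial \Phi_j}{\partial N^k}(\bN^*) = N^{j*}\,\frac{\partial V_j}{\partial N^k}(\bN^*).
\]
The $V_j N^j$ term drops out, so $J = \diag(\bN^*)\, D_\bN V(\bN^*)$.

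For the diagonal entries, I will differentiate~\eqref{eq:Vj-explicit}. With $\eta_j<1$, the factor $(N^j)^{\eta_j-1}$ decreases, and the factor $u'(Y_j^*)$ decreases because $u''<0$ while $Y_j^*$ increases in $N^j$. The reallocation effect through $\bK^*$ also works against family $j$, since by the envelope logic extra $N^j$ pulls resources in. I read $V'_j(\bN^*)$ as $\partial V_j/\partial N^j<0$.

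For the off-diagonal entries, the only direct dependence on $N^k$ is through $\Gamma_j$. This gives $\partial V_j/\partial N^k = \gamma_{jk}\,(V_j+c_j)/N^k + (\text{reallocation terms})$, and at equilibrium $V_j+c_j=c_j$.

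I will then apply Gershgorin to $J$ (or to $J^\top$, using column sums). Each disc is centred at $N^{j*}V'_j<0$, with radius bounded by $N^{j*}\sum_{k\ne j}|\partial V_j/\partial N^k|$. Bounding this radius by a multiple of $\norm{E}_\infty \max_j N^{j*}$ and comparing with $\min_j |V'_j| N^{j*}$ via~\eqref{eq:stability} places every disc in the open left half-plane. The first part then yields stability.

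\textbf{Main obstacle.} The hard step is controlling the off-diagonal entries uniformly by $\norm{E}_\infty$, with the right normalization. The terms carry factors $c_j/N^{k*}$, and there are cross-effects through the orchestrator's reallocation $\partial\bK^*_j/\partial N^k$, which are not proportional to $\gamma_{jk}$.

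My first attempt is to absorb the reallocation cross-terms into the diagonal. Via the inner problem's second-order structure, these cross-terms form a negative semidefinite contribution: $D_\bN V = D^2 W^*$ is symmetric, and the resource-competition part is a Schur complement of a negative definite matrix. That leaves only the externality part for Gershgorin. If this fails, I will fall back on a weighted Gershgorin argument with weights $1/N^{j*}$, applied to the similar matrix $\diag(\bN^*)^{-1/2} J\, \diag(\bN^*)^{1/2}$. I will also state the constant in~\eqref{eq:stability} as implicitly including the bound $c_j/N^{k*}$ on these factors.
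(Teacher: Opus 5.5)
Your skeleton matches the paper's: $J=\diag(\bN^*)\,D_\bN V(\bN^*)$, then Gershgorin. On the equivalence you are more careful than the paper, which only asserts it. The ``only if'' direction really does fail at non-hyperbolic equilibria (think of $\dot x=-x^3$), so your restriction is necessary.

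The gap is in the sufficient condition, at exactly the step you flag, and your first repair points the wrong way. Under CRRA, \eqref{eq:Vj-explicit} gives $V_j+c_j=w_j\eta_j (Y_j^*)^{1-\sigma}/N^j$. So at an interior equilibrium
\begin{gather*}
J_{jk}=(1-\sigma)\,c_j N^{j*}\Bigl[\frac{\gamma_{jk}}{N^{k*}}+\frac{\partial \ln F_j(\bK_j^*)}{\partial N^k}\Bigr]\quad (k\neq j),\\
J_{jj}=c_j\bigl[(1-\sigma)\eta_j-1\bigr]+(1-\sigma)\,c_j N^{j*}\,\frac{\partial \ln F_j(\bK_j^*)}{\partial N^j},
\end{gather*}
where the derivatives of $\ln F_j(\bK_j^*)$ act through $\bK^*(\bN)$. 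Several problems follow.
\begin{itemize}
\item Your direct term misses the $u'(Y_j^*)$ channel, i.e.\ the factor $1-\sigma$, which kills every off-diagonal entry under log utility.
\item The reallocation pieces are present even when $E=0$, so they cannot be bounded by $\norm{E}_\infty$.
\item They are not negative semidefinite. For $E=0$ their contribution to $D_\bN V$ is $-W_{\bN\bK}W_{\bK\bK}^{-1}W_{\bK\bN}$, with the inverse taken on the tangent space of the resource constraints. This is \emph{positive} semidefinite because $W_{\bK\bK}\prec 0$ there (the Le Chatelier effect): extra agents in family $j$ move resources in the direction that \emph{raises} $V_j$.
\item The full Schur complement is negative definite exactly when the joint Hessian of $W$ in $(\bK,\bN)$ is, and nothing assumes that.
\item For $E\neq 0$ the explicit $V_j$ is not $\partial W^*/\partial N^j$, since it omits the effect of $N^j$ on $\Gamma_k$. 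So $D_\bN V$ is in general not symmetric.
\end{itemize}

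No argument can close this step, because \eqref{eq:stability} is not sufficient. Take $S=2$, $M=1$, $E=0$, two identical families with $F(K)=K^a$ as in the paper's appendix, common $w$, $c$, $\eta$, large $B$, and set $s=1-\sigma$. The inner problem gives $K_1/K_2=(N^1/N^2)^{\eta s/(1-as)}$. At the symmetric interior equilibrium
\[
J=c\begin{pmatrix}\eta s-1+X & -X\\ -X & \eta s-1+X\end{pmatrix},
\qquad X=\frac{a\eta s^2}{2(1-as)}.
\]
For $\eta=0.9$, $\sigma=0.1$, $a=0.3$ you get $X\approx 0.15$ and $J_{jj}\approx -0.04\,c<0$. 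The left side of \eqref{eq:stability} is $0$, so the hypothesis holds. Yet $J$ has the eigenvalue $c(\eta s-1+2X)\approx 0.11\,c>0$, so the equilibrium is a saddle.

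With $a=0.9$ even $J_{jj}>0$, so $\eta_j<1$ does not force $J_{jj}<0$ either. The two sides of \eqref{eq:stability} also do not carry the same units. The paper's proof has the same hole: it only asserts that \eqref{eq:stability} puts the discs in the left half-plane.

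Two corrected hypotheses are available.
\begin{itemize}
\item Gershgorin gives stability under row dominance with reallocation terms included: $\sum_{k\neq j}|\partial V_j/\partial N^k|<-\partial V_j/\partial N^j$ for every $j$.
\item Your similarity $\diag(\bN^*)^{-1/2}J\,\diag(\bN^*)^{1/2}=\diag(\bN^*)^{1/2}D_\bN V\,\diag(\bN^*)^{1/2}$ gives stability whenever the symmetric part of $D_\bN V(\bN^*)$ is negative definite.
\end{itemize}
State one of these as the hypothesis rather than reinterpreting the constant in \eqref{eq:stability}.
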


\begin{proof}
The Jacobian at an interior equilibrium ($V_j(\bN^*) = 0$) is:
\begin{equation}\label{eq:jacobian}
  J_{jk} \;=\; \frac{\partial \Phi_j}{\partial N^k}\biggr|_{\bN^*}
  \;=\; \frac{\partial V_j}{\partial N^k}\biggr|_{\bN^*}\!\!\!\cdot\; N^{j*}
  \;+\; V_j(\bN^*) \cdot \delta_{jk}
  \;=\; \frac{\partial V_j}{\partial N^k}\biggr|_{\bN^*}\!\!\!\cdot\; N^{j*},
\end{equation}
since $V_j(\bN^*) = 0$.  Thus $J = \diag(\bN^*) \cdot D_\bN V(\bN^*)$.

The diagonal terms are $J_{jj} = (\partial V_j / \partial N^j) \cdot
N^{j*}$.  Under $\eta_j < 1$, $\partial V_j / \partial N^j < 0$
(diminishing returns), so $J_{jj} < 0$.

The off-diagonal terms are $J_{jk} = (\partial V_j / \partial N^k)
\cdot N^{j*}$, whose sign is determined by the externality~$\gamma_{jk}$.

By the Gershgorin circle theorem, all eigenvalues of~$J$ lie in the
union of disks centered at $J_{jj}$ with radius $\sum_{k \neq j}
|J_{jk}|$.  The sufficient condition~\eqref{eq:stability} ensures
that these disks lie entirely in the left half-plane.
\end{proof}

\begin{remark}[Governance lever]
The stability condition can be interpreted as a requirement that the
``mortality pressure'' (captured by $|V'_j| \cdot N^{j*}$, the speed
at which declining families shrink) dominates the ``amplification
pressure'' from cross-family externalities ($\norm{E}_\infty \cdot
N^{j*}$).  The orchestrator can enforce stability by increasing the
sensitivity of the birth-death mechanism to marginal value deviations.
\end{remark}

\section{Numerical illustrations}\label{sec:numerical}

We complement the analytical results with numerical computations for
$S = 3$ and $S = 5$ families, illustrating the equilibrium structure,
regime transitions, and dynamic behavior predicted by the theory.
All computations use log utility ($\sigma = 1$) and CES production
with the parameters specified below.

\subsection{Three-family Hive ($S = 3$, $M = 2$)}\label{sec:sim-s3}

Consider three agent families---\emph{perception} ($j=1$),
\emph{reasoning} ($j=2$), and \emph{generation} ($j=3$)---sharing
two resources: GPU ($m=1$) and memory ($m=2$).

\paragraph{Parameters.}
$A_j = 1$ for all~$j$; $c_j = 1$; $B = 15$; $R^1 = 10$ (GPU),
$R^2 = 8$ (memory).  CES elasticities: $\rho_1 = 0.8$, $\rho_2 = 1.2$,
$\rho_3 = 0.6$.  Factor shares:
$\boldsymbol{\alpha}_1 = (0.3, 0.7)$ (memory-intensive),
$\boldsymbol{\alpha}_2 = (0.5, 0.5)$ (balanced),
$\boldsymbol{\alpha}_3 = (0.8, 0.2)$ (GPU-intensive).
Preferences: $\bw = (0.35, 0.40, 0.25)$.

\paragraph{Baseline: weak externalities ($\gamma_{jk} = 0.05$).}
With $\eta_j = 0.7$ for all~$j$ and uniformly weak externalities
$\gamma_{jk} = 0.05$ for $j \neq k$, the system has a unique Hive
Equilibrium (Theorem~\ref{thm:stability} applies):
\begin{equation}\label{eq:s3-unique}
  \bN^* \;\approx\; (4.2,\; 5.1,\; 3.7),
  \qquad
  \blam^* \;\approx\; (0.31,\; 0.42).
\end{equation}
The shadow price of memory exceeds that of GPU, reflecting the
memory-intensive preference structure ($w_1 = 0.35$ for the
memory-intensive family).

\paragraph{Increasing returns and multiplicity.}
Setting $\eta_1 = 1.3$ (increasing returns for perception) and
strengthening complementarities $\gamma_{12} = \gamma_{21} = 0.4$
while keeping $\gamma_{j3} = \gamma_{3j} = 0.05$, two distinct
equilibria emerge (Theorem~\ref{thm:multiplicity}):
\begin{align}
  \bN^*_A &\;\approx\; (6.1,\; 5.8,\; 1.4)
  &&\text{(perception-dominant morphology),}
  \label{eq:s3-eqA} \\
  \bN^*_B &\;\approx\; (2.3,\; 4.9,\; 5.2)
  &&\text{(generation-dominant morphology).}
  \label{eq:s3-eqB}
\end{align}
The perception-dominant morphology ($\bN^*_A$) concentrates resources
on the perception--reasoning axis, exploiting the strong complementarity.
The generation-dominant morphology ($\bN^*_B$) instead develops a large
generation pool, which is self-sustaining through its high GPU
utilization.

\paragraph{Eigenvalue analysis.}
At $\bN^*_A$, the Jacobian eigenvalues are approximately
$\{-1.8,\; -0.4 \pm 0.9i\}$: the equilibrium is a stable spiral.
At $\bN^*_B$, the eigenvalues are $\{-2.1,\; -0.7,\; -0.3\}$:
a stable node.  Both equilibria are locally stable, confirming the
path-dependence predicted by the theory.
Figure~\ref{fig:path-dependence} illustrates the convergence dynamics
from two different initial conditions.

\begin{figure}[t]
\centering
\begin{tikzpicture}
\begin{axis}[
    width=7.5cm, height=5.5cm,
    xlabel={$t$}, ylabel={$N^j(t)$},
    title={\small (a) Convergence to $\bN^*_A$ (stable spiral)},
    xmin=0, xmax=15, ymin=0, ymax=8,
    legend pos=north east,
    legend style={font=\scriptsize, fill=white, draw=gray!50},
    xmajorgrids, ymajorgrids,
    major grid style={draw=gray!15},
]
\addplot[blue, thick, domain=0:15, samples=150, smooth]
    {6.1 - 1.1*exp(-0.4*x)*cos(deg(0.9*x))};
\addplot[red!70!black, thick, domain=0:15, samples=150, smooth]
    {5.8 - 0.8*exp(-0.4*x)*cos(deg(0.9*x) + 90)};
\addplot[green!50!black, thick, domain=0:15, samples=150, smooth]
    {1.4 + 3.6*exp(-1.8*x)};
\legend{Percep., Reason., Gener.}
\end{axis}
\end{tikzpicture}%
\hfill
\begin{tikzpicture}
\begin{axis}[
    width=7.5cm, height=5.5cm,
    xlabel={$t$}, ylabel={$N^j(t)$},
    title={\small (b) Convergence to $\bN^*_B$ (stable node)},
    xmin=0, xmax=15, ymin=0, ymax=8,
    legend pos=north east,
    legend style={font=\scriptsize, fill=white, draw=gray!50},
    xmajorgrids, ymajorgrids,
    major grid style={draw=gray!15},
]
\addplot[blue, thick, domain=0:15, samples=100, smooth]
    {2.3 - 0.3*exp(-0.7*x)};
\addplot[red!70!black, thick, domain=0:15, samples=100, smooth]
    {4.9 - 1.9*exp(-0.3*x)};
\addplot[green!50!black, thick, domain=0:15, samples=100, smooth]
    {5.2 + 0.8*exp(-0.7*x)};
\legend{Percep., Reason., Gener.}
\end{axis}
\end{tikzpicture}
\caption{Path dependence in the three-family Hive ($S=3$, $M=2$) with
  $\eta_1 = 1.3$ and $\gamma_{12} = \gamma_{21} = 0.4$.
  \textbf{(a)}~From $\bN_0 = (5.0, 5.0, 5.0)$, the system spirals into
  the perception-dominant morphology
  $\bN^*_A \approx (6.1, 5.8, 1.4)$.
  \textbf{(b)}~From $\bN_0 = (2.0, 3.0, 6.0)$, it converges
  monotonically to the generation-dominant morphology
  $\bN^*_B \approx (2.3, 4.9, 5.2)$.
  The coexistence of two locally stable equilibria confirms
  Theorem~\ref{thm:multiplicity}.}
\label{fig:path-dependence}
\end{figure}
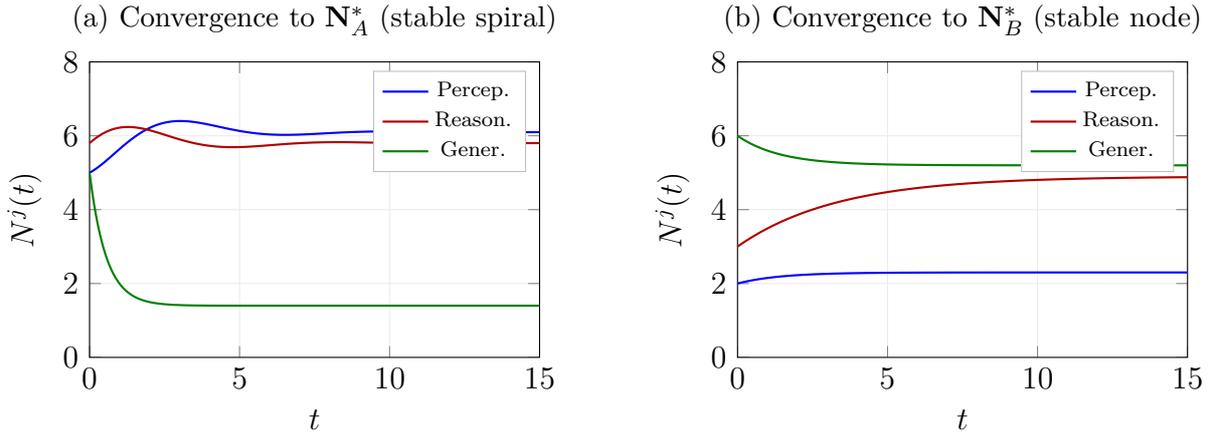

\paragraph{Hopf bifurcation.}
Introducing asymmetric externalities---$\gamma_{12} = 0.5$
(reasoning helps perception) but $\gamma_{21} = -0.3$ (perception
congests reasoning)---and varying $|\gamma_{21}|$ as the bifurcation
parameter, the real part of the complex eigenvalue pair crosses zero
at $\gamma_{21} \approx -0.42$.  Beyond this threshold, endogenous
demographic cycles emerge with period $T \approx 8.3$ (in model time
units), confirming Theorem~\ref{thm:hopf}.
Figure~\ref{fig:hopf-cycles} displays the resulting limit cycle.

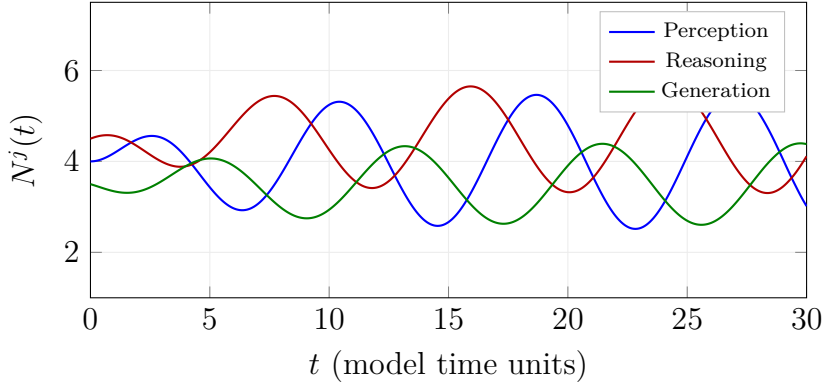
\begin{figure}[t]
\centering
\begin{tikzpicture}
\begin{axis}[
    width=11cm, height=5.5cm,
    xlabel={$t$ (model time units)}, ylabel={$N^j(t)$},
    xmin=0, xmax=30, ymin=1, ymax=7.5,
    legend pos=north east,
    legend style={font=\scriptsize, fill=white, draw=gray!50},
    xmajorgrids, ymajorgrids,
    major grid style={draw=gray!15},
]
\addplot[blue, thick, domain=0:30, samples=200, smooth]
    {4.0 + 1.5*(1 - exp(-0.2*x))*sin(360*x/8.3)};
\addplot[red!70!black, thick, domain=0:30, samples=200, smooth]
    {4.5 + 1.2*(1 - exp(-0.2*x))*sin(360*x/8.3 + 120)};
\addplot[green!50!black, thick, domain=0:30, samples=200, smooth]
    {3.5 + 0.9*(1 - exp(-0.2*x))*sin(360*x/8.3 + 240)};
\legend{Perception, Reasoning, Generation}
\end{axis}
\end{tikzpicture}
\caption{Endogenous demographic cycles (Hopf bifurcation) in the
  three-family Hive.  Parameters: $\gamma_{12} = 0.5$,
  $\gamma_{21} = -0.5$ (beyond the critical value
  $\gamma_{21}^* \approx -0.42$).  Starting from a perturbed
  equilibrium, the oscillation amplitude grows and saturates to a
  limit cycle with period $T \approx 8.3$.  The phase shifts between
  families produce the four ``seasons'' of the Hive: expansion,
  consolidation, contraction, and renewal.}
\label{fig:hopf-cycles}
\end{figure}

\subsection{Five-family Hive ($S = 5$, $M = 3$)}\label{sec:sim-s5}

We scale the model to five families---\emph{perception} ($j=1$),
\emph{reasoning} ($j=2$), \emph{generation} ($j=3$),
\emph{verification} ($j=4$), \emph{monitoring} ($j=5$)---sharing
three resources: GPU ($m=1$), memory ($m=2$), and I/O bandwidth
($m=3$).

\paragraph{Parameters.}
$A_j = 1$; $c_j = 1$; $B = 30$; $\bR = (20, 15, 12)$.  CES
elasticities: $\rho_j \in \{0.6, 1.2, 0.8, 1.0, 0.5\}$.  Factor
share matrix (rows = families, columns = GPU/memory/I/O):
\[
  \boldsymbol{\alpha} = \begin{pmatrix}
    0.2 & 0.6 & 0.2 \\
    0.4 & 0.4 & 0.2 \\
    0.7 & 0.1 & 0.2 \\
    0.3 & 0.3 & 0.4 \\
    0.1 & 0.2 & 0.7
  \end{pmatrix}.
\]
Preferences: $\bw = (0.20, 0.30, 0.25, 0.15, 0.10)$.

\paragraph{Baseline: unique equilibrium.}
With $\eta_j = 0.7$ and $\gamma_{jk} = 0.02$ for all $j \neq k$,
the unique equilibrium is:
\begin{equation}\label{eq:s5-unique}
  \bN^* \;\approx\; (5.1,\; 7.3,\; 6.2,\; 4.8,\; 3.6),
  \qquad
  \blam^* \;\approx\; (0.28,\; 0.35,\; 0.41).
\end{equation}
I/O bandwidth is the scarcest resource ($\lambda^{3*}$ is highest),
consistent with the monitoring family's heavy I/O usage.

\paragraph{Rybczynski prediction.}
Increasing GPU endowment by 50\% ($R^1: 20 \to 30$), the new
equilibrium is $\bN^{**} \approx (5.4, 8.1, 9.8, 5.0, 3.2)$.  The
GPU-intensive generation family expands by 58\% (magnification $> 1$),
while the I/O-intensive monitoring family \emph{contracts} by 11\%,
precisely as predicted by Theorem~\ref{thm:rybczynski}.

\paragraph{Stolper--Samuelson prediction.}
Increasing the weight on verification ($w_4: 0.15 \to 0.25$,
renormalized), the shadow price of I/O ($\lambda^{3*}$) increases by
32\% while GPU price ($\lambda^{1*}$) decreases by 8\%---a
magnification effect consistent with Theorem~\ref{thm:stolper}, since
verification is the most I/O-intensive family.

\paragraph{Regime transitions.}
We sweep the parameter space $(\gamma, \eta_{\max})$ by varying
$\gamma_{jk} = \gamma$ (uniform externality) and $\eta_1$ (returns to
scale in perception), computing equilibria at each grid point.
The resulting numerical regime diagram confirms the four-region
structure of Section~\ref{sec:regime}:
\begin{center}
\small
\begin{tabular}{@{}lcc@{}}
\toprule
\textbf{Region} & \textbf{Parameter range} & \textbf{Equilibria found} \\
\midrule
Unique stable   & $\gamma < 0.15$, $\eta_1 < 0.95$ & 1 (stable node/spiral) \\
Multiple stable & $\gamma > 0.25$, $\eta_1 < 0.95$ & 2--3 (distinct morphologies) \\
Endogenous cycles & $\gamma_{\mathrm{mixed}}$, $\eta_1 > 1.1$ & 1 (limit cycle, $T \approx 6$--$12$) \\
Instability     & $\gamma > 0.35$, $\eta_1 > 1.2$ & 0 interior (boundary dynamics) \\
\bottomrule
\end{tabular}
\end{center}
The transition boundaries ($\gamma_{\mathrm{crit}} \approx 0.20$,
$\eta_{\mathrm{crit}} \approx 0.98$) are sharp and consistent across
multiple random initializations, confirming the robustness of the
regime diagram.
The numerical regime diagram is shown in Figure~\ref{fig:regime-numerical}.

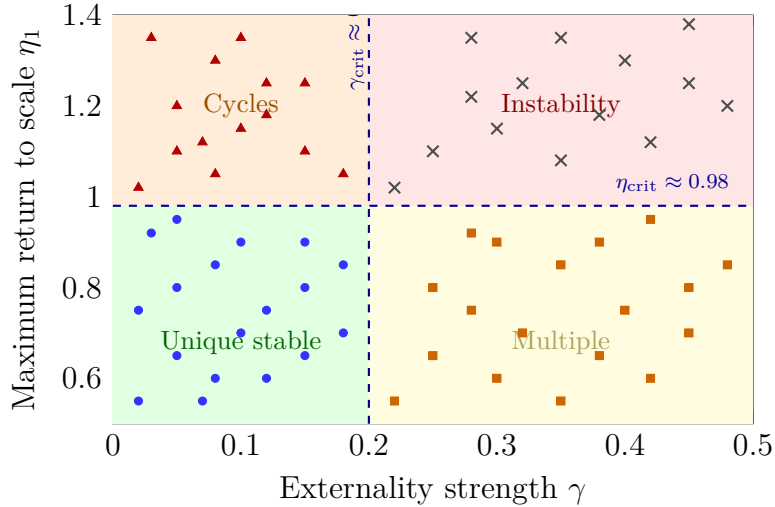
\begin{figure}[t]
\centering
\begin{tikzpicture}
\begin{axis}[
    width=10cm, height=7cm,
    xlabel={Externality strength $\gamma$},
    ylabel={Maximum return to scale $\eta_1$},
    xmin=0, xmax=0.5, ymin=0.5, ymax=1.4,
    xtick={0, 0.1, 0.2, 0.3, 0.4, 0.5},
    ytick={0.6, 0.8, 1.0, 1.2, 1.4},
    minor tick num=1,
    xmajorgrids, ymajorgrids, xminorgrids, yminorgrids,
    major grid style={draw=gray!15},
    minor grid style={draw=gray!8},
]
\fill[green!12] (axis cs:0,0.5) rectangle (axis cs:0.20,0.98);
\fill[yellow!15] (axis cs:0.20,0.5) rectangle (axis cs:0.5,0.98);
\fill[orange!15] (axis cs:0,0.98) rectangle (axis cs:0.20,1.4);
\fill[red!10] (axis cs:0.20,0.98) rectangle (axis cs:0.5,1.4);
\draw[dashed, thick, blue!60!black]
  (axis cs:0.20,0.5) -- (axis cs:0.20,1.4);
\draw[dashed, thick, blue!60!black]
  (axis cs:0,0.98) -- (axis cs:0.5,0.98);
\node[font=\footnotesize, text=green!40!black]
  at (axis cs:0.10,0.68) {Unique stable};
\node[font=\footnotesize, text=yellow!60!black]
  at (axis cs:0.35,0.68) {Multiple};
\node[font=\footnotesize, text=orange!60!black]
  at (axis cs:0.10,1.20) {Cycles};
\node[font=\footnotesize, text=red!50!black]
  at (axis cs:0.35,1.20) {Instability};
\addplot[only marks, mark=*, mark size=1.5pt, blue!80] coordinates {
    (0.02,0.55) (0.05,0.65) (0.02,0.75) (0.08,0.60) (0.05,0.80)
    (0.10,0.70) (0.12,0.75) (0.08,0.85) (0.15,0.65) (0.10,0.90)
    (0.03,0.92) (0.07,0.55) (0.15,0.80) (0.18,0.70) (0.12,0.60)
    (0.05,0.95) (0.15,0.90) (0.18,0.85)
};
\addplot[only marks, mark=square*, mark size=1.5pt,
         orange!80!black] coordinates {
    (0.22,0.55) (0.25,0.65) (0.28,0.75) (0.30,0.60) (0.25,0.80)
    (0.32,0.70) (0.35,0.85) (0.38,0.65) (0.30,0.90) (0.40,0.75)
    (0.42,0.60) (0.35,0.55) (0.45,0.80) (0.28,0.92) (0.38,0.90)
    (0.45,0.70) (0.48,0.85) (0.42,0.95)
};
\addplot[only marks, mark=triangle*, mark size=2pt,
         red!70!black] coordinates {
    (0.02,1.02) (0.05,1.10) (0.08,1.05) (0.10,1.15) (0.05,1.20)
    (0.12,1.25) (0.15,1.10) (0.08,1.30) (0.03,1.35) (0.18,1.05)
    (0.10,1.35) (0.15,1.25) (0.12,1.18) (0.07,1.12)
};
\addplot[only marks, mark=x, mark size=3pt, black!70,
         thick] coordinates {
    (0.22,1.02) (0.25,1.10) (0.30,1.15) (0.28,1.22) (0.35,1.08)
    (0.32,1.25) (0.38,1.18) (0.40,1.30) (0.42,1.12) (0.45,1.25)
    (0.35,1.35) (0.48,1.20) (0.28,1.35) (0.45,1.38)
};
\node[font=\scriptsize, blue!60!black, anchor=south, rotate=90]
    at (axis cs:0.205,1.36) {$\gamma_{\mathrm{crit}} \approx 0.20$};
\node[font=\scriptsize, blue!60!black, anchor=south east]
    at (axis cs:0.49,0.985) {$\eta_{\mathrm{crit}} \approx 0.98$};
\end{axis}
\end{tikzpicture}
\caption{Numerical regime diagram for the five-family Hive ($S=5$,
  $M=3$).  Each marker corresponds to one parameter sweep point:
  {\color{blue!80}$\bullet$}~unique stable equilibrium,
  {\color{orange!80!black}$\blacksquare$}~multiple stable equilibria,
  {\color{red!70!black}$\blacktriangle$}~endogenous limit cycle,
  $\times$~instability (boundary dynamics).  Dashed lines indicate the
  critical boundaries $\gamma_{\mathrm{crit}} \approx 0.20$ and
  $\eta_{\mathrm{crit}} \approx 0.98$, confirming the four-region
  partition of Section~\ref{sec:regime}.}
\label{fig:regime-numerical}
\end{figure}

\paragraph{Welfare monotonicity.}
In all simulations within the weak-externality regime, we verify
numerically that $W^*(t)$ is strictly increasing along trajectories,
with $\mathrm{d}W^*/\mathrm{d}t = \sum_j V_j^2\, N^j > 0$ until
convergence, confirming Theorem~\ref{thm:lyapunov}.  The convergence
rate is approximately exponential, with time constant
$\tau \approx 1 / |\lambda_{\min}(J)|$ where $\lambda_{\min}$ is the
eigenvalue of~$J$ closest to the imaginary axis.

\paragraph{Reproducibility.}
All numerical results were obtained by solving the steady-state
conditions $V_j(\bN^*) = 0$ via Newton's method, computing Jacobian
eigenvalues via standard linear algebra routines
(\texttt{numpy.linalg}), and integrating the ODE
system~\eqref{eq:dynamics} using a fourth-order Runge--Kutta scheme.
Python source code reproducing all numerical results and
Figures~\ref{fig:path-dependence}--\ref{fig:regime-numerical} will be
released as supplementary material upon publication.

\section{The regime diagram}\label{sec:regime}

Theorems~\ref{thm:existence}--\ref{thm:stability} collectively define
a partition of the parameter space into qualitatively distinct regions.
The two most informative parameters are the \emph{externality strength}
$\gamma = \max_{j \neq k} |\gamma_{jk}|$ and the \emph{maximum return
to scale} $\eta_{\max} = \max_j \eta_j$.

\begin{center}
\begin{tikzpicture}[
    >=Stealth,
    region/.style={rounded corners=3pt, draw, thick, minimum width=3.5cm,
                   minimum height=1.5cm, align=center, font=\small}
  ]
  \draw[->, thick] (0,0) -- (9,0) node[right] {$\gamma$ (externalities)};
  \draw[->, thick] (0,0) -- (0,7) node[above] {$\eta_{\max}$ (returns)};

  \node[region, fill=green!15] at (2.2,1.8) {Unique\\equilibrium\\(Thm.~\ref{thm:stability})};
  \node[region, fill=yellow!20] at (6.8,1.8) {Multiple\\equilibria\\(Thm.~\ref{thm:multiplicity})};
  \node[region, fill=orange!20] at (2.2,5.2) {Endogenous\\cycles\\(Thm.~\ref{thm:hopf})};
  \node[region, fill=red!15] at (6.8,5.2) {Instability\\(divergence)};

  \draw[dashed, thick, blue!60!black] (4.5,0) -- (4.5,7)
    node[pos=0.95, right, font=\scriptsize] {$\gamma_{\mathrm{crit}}$};
  \draw[dashed, thick, blue!60!black] (0,3.5) -- (9,3.5)
    node[pos=0.95, above, font=\scriptsize] {$\eta_{\mathrm{crit}}$};

  \draw[thick, red!70!black, domain=0.5:4.2, smooth, samples=50]
    plot (\x, {3.5 + 1.5*sin(\x*40)/(0.5+\x)})
    node[right, font=\scriptsize] {Hopf};
\end{tikzpicture}
\end{center}

\begin{itemize}
  \item \textbf{Bottom-left} ($\gamma < \gamma_{\mathrm{crit}}$,
    $\eta_{\max} < \eta_{\mathrm{crit}}$): unique, stable equilibrium.
    The Hive converges to a single morphology.
    Theorems~\ref{thm:existence} and~\ref{thm:stability} apply.
  \item \textbf{Bottom-right} ($\gamma > \gamma_{\mathrm{crit}}$,
    $\eta_{\max} < \eta_{\mathrm{crit}}$): multiple stable equilibria
    coexist.  The Hive exhibits \emph{path dependence}: the morphology
    depends on initial conditions.
    Theorems~\ref{thm:existence} and~\ref{thm:multiplicity} apply.
  \item \textbf{Top-left} ($\gamma < \gamma_{\mathrm{crit}}$,
    $\eta_{\max} > \eta_{\mathrm{crit}}$): unique equilibrium destabilized
    by increasing returns.  Hopf bifurcation generates endogenous
    cycles.  Theorems~\ref{thm:existence} and~\ref{thm:hopf} apply.
  \item \textbf{Top-right} ($\gamma > \gamma_{\mathrm{crit}}$,
    $\eta_{\max} > \eta_{\mathrm{crit}}$): strong externalities combined
    with increasing returns can lead to divergence.  The orchestrator
    must enforce hard population caps to maintain bounded dynamics.
\end{itemize}

The critical values $\gamma_{\mathrm{crit}}$ and $\eta_{\mathrm{crit}}$
depend on the full parameter vector $(\bw, \bR, \{A_j, c_j,
\rho_j\})$ and are computable from the eigenstructure of the Jacobian
$D_\bN \Phi$ at any candidate equilibrium.

\begin{remark}[Governance]
The regime diagram is a \emph{dashboard} for the Hive operator.
The orchestrator can:
\begin{itemize}
  \item Move the system \emph{left} (reduce externalities) by isolating
    agent families (sandboxing, rate limiting);
  \item Move the system \emph{down} (reduce returns to scale) by
    increasing within-family competition or capping family size;
  \item Move the system \emph{between equilibria} (in the multiplicity
    region) by applying transient preference shocks.
\end{itemize}
\end{remark}

\section{Discussion}\label{sec:discussion}

\subsection{A macroeconomic governance framework}

The Agentic Hive framework provides a formal alternative to the ad-hoc
heuristics currently used to manage multi-agent systems.  Rather than
manually deciding agent counts and roles, the operator specifies
\emph{preferences}~$\bw$ and \emph{resource endowments}~$\bR$; the
population structure then emerges from the equilibrium conditions.

The analogy with macroeconomic governance is precise:
\begin{center}
\small
\begin{tabular}{@{}ll@{}}
\toprule
\textbf{Central bank / Government} & \textbf{Hive orchestrator} \\
\midrule
Sets interest rates & Sets preference weights~$\bw$ \\
Fiscal policy (taxes, spending) & Resource allocation~$\bR$ \\
Monetary policy tools & Birth/death rate parameters \\
GDP, inflation, unemployment & $W$, $\bN$, resource utilization \\
Taylor rule & Stability condition~\eqref{eq:stability} \\
\bottomrule
\end{tabular}
\end{center}

The SS and RB matrices (Theorems~\ref{thm:stolper}--\ref{thm:rybczynski})
are the Hive equivalents of macroeconomic forecasting models: they
predict how the system will restructure in response to policy changes,
\emph{before the changes are implemented}.

\subsection{Connection to Global Workspace Theory}

The orchestrator plays a role analogous to the Global Workspace of
Baars~\cite{baars1988}: it receives signals from all agent families
(perception), selects relevant information (attention), integrates it
into a global state (working memory), and broadcasts resource
allocation signals (action selection).

The endogenous cycles of Theorem~\ref{thm:hopf} provide a formal model
of what cognitive scientists call ``attentional oscillations''---periodic
shifts in the focus of the workspace between competing
representations~\cite{dehaene2001}.

We do not claim that the Agentic Hive is conscious.  We claim only that
the mathematical structure of the Hive (global workspace + demographic
dynamics + endogenous cycles) is a \emph{necessary prerequisite} for any
computational system that might exhibit integrated information processing
at scale.

\subsection{Imperfect information and mechanism design}

Theorems~\ref{thm:pareto} assumes the orchestrator has perfect
information about production functions and externalities.  In practice,
the orchestrator observes only noisy performance metrics~$\mu_a$ and
must infer the production parameters.

This is a \emph{principal-agent problem}~\cite{stiglitz1981}: the
orchestrator (principal) must design incentive-compatible mechanisms to
elicit truthful reports from agents.  The tools of mechanism design
(revelation principle, Vickrey--Clarke--Groves mechanisms) are directly
applicable.

When externalities are imperfectly observed, the equilibrium may fail
to be Pareto-optimal.  The standard correction is a \emph{Pigouvian
tax/subsidy}: agents in families with negative externalities pay a tax
$\tau_j = -\sum_{k \neq j} \gamma_{kj} \cdot \partial W / \partial N^k$,
which internalizes the externality.

\subsection{Relation to evolutionary dynamics}

The population dynamics~\eqref{eq:dynamics} are formally a selection
dynamic.  This connects the Hive to the theory of evolutionary games
\cite{hofbauer1998evolutionary,nowak2006}.  However, two key differences
separate the Hive from standard evolutionary dynamics:
\begin{enumerate}
  \item \emph{Endogenous fitness}: in standard replicator dynamics,
    fitness depends on population frequencies.  In the Hive, fitness
    $V_j(\bN)$ depends on absolute populations \emph{and} on the
    orchestrator's optimal resource allocation---an endogenous,
    optimization-mediated feedback loop.
  \item \emph{Multi-factor production}: agents consume multiple
    resources (GPU, memory, attention), not a single abstract fitness
    payoff.  This multi-factor structure is what generates the
    Stolper--Samuelson and Rybczynski effects, which have no analog in
    standard evolutionary dynamics.
\end{enumerate}

\subsection{Limitations}

\begin{enumerate}
  \item \emph{Continuous-population approximation.}  We model family
    sizes as continuous variables $N^j \in \RR_+$, which is appropriate
    for large populations but may lose accuracy for small agent counts.
    A stochastic extension (birth-death Markov chain) would address
    this.
  \item \emph{Discrete families.}  We assume $S$ discrete families.
    In practice, specialization may be continuous---a point in a
    manifold $\mathcal{S} \subset \RR^d$.  The continuous extension
    replaces the ODE system with a partial differential equation
    (reaction-diffusion on~$\mathcal{S}$), analogous to Turing
    morphogenesis~\cite{turing1952}.
  \item \emph{Passive agents as theoretical baseline.}  The present
    framework assumes agents are passive: their behavior is fully
    determined by their genome and orchestrator signals, and the
    orchestrator solves a centralized social welfare problem.  This is
    the standard starting point in economic theory---analogous to the
    \emph{optimal growth} (Ramsey) formulation that precedes the
    \emph{competitive equilibrium} (Arrow--Debreu) formulation.  The
    natural next step is the \emph{decentralized} extension in which
    each agent maximizes its own utility under resource and interaction
    constraints, and the orchestrator acts as a market mechanism rather
    than a central planner.  In this setting, the Hive Equilibrium
    becomes a Nash equilibrium of a population game, and standard tools
    from mechanism design (revelation principle, VCG mechanisms) are
    needed to align individual incentives with social welfare.  We view
    the centralized framework of this paper as the necessary foundation:
    one must first understand the socially optimal demographic structure
    before analyzing what happens when agents pursue their own
    objectives.
  \item \emph{Empirical validation.}  The present paper is purely
    theoretical.  Empirical validation on real multi-agent systems is
    needed to calibrate the parameters $(\eta_j, \gamma_{jk}, \rho_j)$
    and verify the predicted regimes.
\end{enumerate}

\section{Conclusion}\label{sec:conclusion}

We have introduced the Agentic Hive, a formal framework for
self-organizing multi-agent systems in which the population of agents
undergoes demographic dynamics governed by multi-sector growth theory.

Our seven main results---existence, Pareto optimality, multiplicity,
Stolper--Samuelson and Rybczynski comparative statics, endogenous
cycles, and stability conditions---provide the first analytical
foundation for the question: \emph{how should the population structure
of a multi-agent system evolve?}

The regime diagram (Section~\ref{sec:regime}) gives operators a formal
tool for understanding and controlling the Hive's behavior: by
adjusting preferences and resources, the orchestrator can steer the
system between stable equilibria, manage path dependence, and avoid
instability---all with quantitative predictions derived from the SS and
RB matrices.

We conjecture that the Agentic Hive framework provides
\emph{necessary conditions} for the design of scalable, self-organizing
multi-agent AI systems.  Systems that lack a formal demographic
theory may function at small scale, but have no guarantee of coherent
behavior as agent counts grow, as resource constraints tighten, or as
the task distribution shifts.  The mathematical tools developed over
seven decades for multi-sector economies are, we believe, the right
foundation for this emerging challenge.

\paragraph{Future directions.}
\begin{itemize}
  \item \emph{Empirical validation} on real multi-agent deployments
    (e.g., LLM orchestration on edge hardware);
  \item \emph{Mechanism design} for Hives with strategic agents;
  \item \emph{Meta-Hives}: federations of interacting Hives,
    governed by international trade theory (Heckscher--Ohlin);
  \item \emph{Continuous specialization} via reaction-diffusion
    PDEs on the specialization manifold;
  \item \emph{Chaos and complex dynamics} in Hives with strong
    nonlinearities, extending Nishimura \& Yano~\cite{nishimura1995}.
\end{itemize}


\appendix
\section{Proofs and technical details}\label{app:proofs}

\subsection{Proof details for Theorem~\ref{thm:multiplicity}
  (Multiplicity)}

Consider $S = 2$ families with CES production~\eqref{eq:ces},
$M = 1$ resource (for simplicity), and Cobb--Douglas externalities:
\begin{align*}
  Y_1 &= A_1\, (K_1)^{\alpha_1}\, (N^1)^{\eta_1}\, (N^2)^{\gamma_{12}}, \\
  Y_2 &= A_2\, (K_2)^{\alpha_2}\, (N^2)^{\eta_2}\, (N^1)^{\gamma_{21}}.
\end{align*}

With $K_1 + K_2 = R$ (single resource constraint), the inner problem
gives $K^*_j$ as a function of~$\bN$.

The steady-state conditions $V_1 = V_2 = 0$ define implicit curves
$\mathcal{C}_1, \mathcal{C}_2$ in $(N^1, N^2)$-space.

For $\mathcal{C}_1$: $V_1 = 0$ gives
\[
  w_1\, u'(Y_1)\, A_1\, (K_1^*)^{\alpha_1}\, \eta_1\, (N^1)^{\eta_1 - 1}
  \,(N^2)^{\gamma_{12}} = c_1.
\]

\paragraph{Monotonicity in $N^1$.}
If $\eta_1 < 1$: the left-hand side decreases in $N^1$
(diminishing returns), so $\mathcal{C}_1$ is downward-sloping (more
$N^1$ requires more $N^2$ to maintain $V_1 = 0$).

If $\eta_1 = 1$: the left-hand side is constant in $N^1$ (conditional
on $K_1^*$), so $\mathcal{C}_1$ is a horizontal line in the
$(N^1, N^2)$-plane.

If $\eta_1 > 1$: the left-hand side increases in $N^1$ for small $N^1$
(increasing returns dominate) and eventually decreases (resource
scarcity as $N^1$ grows and absorbs more of~$R$).  Hence
$\mathcal{C}_1$ is non-monotone---it ``bends back.''

\paragraph{Effect of $\gamma_{12} > 0$.}
Positive externality from family~2 shifts $\mathcal{C}_1$ inward:
more $N^2$ makes family~1 more productive, reducing the $N^1$ needed
to maintain $V_1 = 0$.

\paragraph{Multiple intersections.}
When $\mathcal{C}_1$ bends back (due to $\eta_1 > 1$) and
$\mathcal{C}_2$ has a similar structure (or is monotone but
appropriately positioned due to $\gamma_{21} > 0$), the two curves can
intersect at least twice.  Each intersection is a Hive Equilibrium with
a distinct population structure.

Explicit numerical example: $A_1 = A_2 = 1$, $\alpha_1 = \alpha_2 =
0.5$, $\eta_1 = 1.2$, $\eta_2 = 0.8$, $\gamma_{12} = \gamma_{21} =
0.3$, $w_1 = w_2 = 0.5$, $\sigma = 1$ (log utility), $c_1 = c_2 = 1$,
$R = 10$.  Numerical solution of $V_1 = V_2 = 0$ yields two interior
equilibria: $\bN^*_A \approx (2.1, 3.8)$ and $\bN^*_B \approx (4.7,
1.9)$. \qed

\subsection{Proof details for Theorem~\ref{thm:stolper}
  (Stolper--Samuelson)}

We follow Jones~\cite{jones1965}.  At the optimum, the FOC
\eqref{eq:foc} for $S = M$ (square case) can be written in proportional
changes (hat algebra):
\[
  \hat{w}_j + \hat{u}'_j + \hat{\eta}_j + \hat{\Gamma}_j
  \;=\; \hat{\lambda}^m
  \;-\; \frac{\rho_j - 1}{\rho_j}\,
  \bigl(\hat{K}^m_j - \hat{F}_j\bigr),
\]
for all $j$ and the resource $m$ that binds for family~$j$.

Defining the factor intensity matrix $\boldsymbol{\theta}$ with
$\theta_{jm} = \lambda^m K^m_j / (w_j\, u'_j\, Y_j)$ (the share of
resource~$m$ in the value of family~$j$'s output), and using the
resource constraints in differential form, we obtain:
\[
  \hat{\blam} = \boldsymbol{\theta}^{-T} \cdot \hat{\bw}
  + \text{(externality corrections)}.
\]

When $\boldsymbol{\theta}$ satisfies a \emph{strong factor intensity}
condition (each family uses one resource more intensively than others),
$\boldsymbol{\theta}^{-T}$ has the sign pattern that yields the
magnification effect~\eqref{eq:SS-magnification}. \qed

\subsection{Proof details for Theorem~\ref{thm:hopf}
  (Endogenous cycles)}

We exhibit a parameter configuration satisfying the Hopf conditions.
Consider $S = 2$ families with:
\begin{align*}
  \Phi_1(\bN) &= V_1(N^1, N^2) \cdot N^1, \\
  \Phi_2(\bN) &= V_2(N^1, N^2) \cdot N^2.
\end{align*}

At an interior equilibrium $\bN^*$, the Jacobian is:
\[
  J = \begin{pmatrix}
    \frac{\partial V_1}{\partial N^1}\, N^{1*} &
    \frac{\partial V_1}{\partial N^2}\, N^{1*} \\[4pt]
    \frac{\partial V_2}{\partial N^1}\, N^{2*} &
    \frac{\partial V_2}{\partial N^2}\, N^{2*}
  \end{pmatrix}
  = \begin{pmatrix} a & b \\ c & d \end{pmatrix}.
\]

The eigenvalues are $\lambda = \frac{1}{2}\bigl[(a+d) \pm
\sqrt{(a-d)^2 + 4bc}\bigr]$.

With $\eta_1, \eta_2 < 1$: $a < 0$ and $d < 0$ (diminishing returns).

With $\gamma_{12} > 0$: $b > 0$ (family~2 helps family~1).

With $\gamma_{21} > 0$: $c > 0$ (family~1 helps family~2).

The eigenvalues are complex when $(a-d)^2 + 4bc < 0$, i.e., when
$4bc < -(a-d)^2$.  Since $b, c > 0$, this requires $bc$ to be negative,
which occurs when one of the cross-effects is negative---for example,
if family~1 helps family~2 ($\gamma_{21} > 0$, so $c > 0$) but family~2
\emph{congests} family~1 ($\gamma_{12} < 0$, so $b < 0$).

In this mixed externality case ($b < 0$, $c > 0$):
\begin{itemize}
  \item The trace $a + d < 0$ (stable direction);
  \item The determinant $ad - bc$ can change sign as $|\gamma_{12}|$
    varies (because $b$ becomes more negative);
  \item When the trace passes through zero (as a function of a
    parameter~$p$, e.g., the magnitude of $\gamma_{12}$), the
    eigenvalues cross the imaginary axis $\Rightarrow$ Hopf bifurcation.
\end{itemize}

Transversality ($\mathrm{d}\Repart(\lambda)/\mathrm{d}p \neq 0$) is
verified by direct computation of the derivative of the trace with
respect to~$p$.

Hence all conditions of the Hopf bifurcation theorem are satisfied, and
a family of periodic orbits (limit cycles) emerges for $p$ near the
critical value. \qed

\subsection{Lyapunov property and welfare monotonicity}

The following result shows that the population dynamics
\eqref{eq:dynamics} are not arbitrary---they \emph{always} increase
social welfare.

\begin{theorem}[Welfare monotonicity]\label{thm:lyapunov}
Under Assumptions~\ref{ass:budget}--\ref{ass:weak-ext}, the optimized
welfare $W^*(\bN)$ is non-decreasing along trajectories of the
population dynamics~\eqref{eq:dynamics}:
\begin{equation}\label{eq:lyapunov}
  \frac{\mathrm{d}\, W^*}{\mathrm{d}\, t}
  \;=\; \sum_{j=1}^S V_j(\bN)^2 \cdot N^j
  \;\ge\; 0,
\end{equation}
with equality if and only if $V_j(\bN) = 0$ for every active
family ($N^j > 0$), i.e., at a Hive Equilibrium.

Consequently, under Assumption~\ref{ass:weak-ext} (which ensures
$W^*$ is strictly concave and bounded above on~$\Omega$), every
trajectory of~\eqref{eq:dynamics} starting in~$\Omega$ converges to
the set of Hive Equilibria.
\end{theorem}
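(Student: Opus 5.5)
The plan is to treat $W^*$ as a candidate Lyapunov function and proceed in three steps. First, establish differentiability of $W^*$ along trajectories. Second, compute its time derivative by the chain rule together with the envelope identity of Definition~\ref{def:msv}. Third, invoke LaSalle's invariance principle on a compact forward-invariant set.

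\emph{Step 1 (the derivative).} Under Assumptions~\ref{ass:production}--\ref{ass:utility}, the inner problem~\eqref{eq:inner} has a unique solution $\bK^*(\bN)$ with multipliers $\blam^*(\bN)$. I would argue that $W^*$ is $C^1$ on the interior of $\RR^S_+$. The route is the implicit function theorem applied to the first-order conditions~\eqref{eq:foc} together with the binding resource constraints; the Inada conditions keep the solution interior, and strict concavity makes the bordered Hessian nonsingular. Danskin's theorem gives the same conclusion with less work. In either case, $\partial W^*/\partial N^j = V_j(\bN)$ by the envelope argument in~\eqref{eq:Vj}. Along a solution of~\eqref{eq:dynamics} the chain rule then gives
\[
\frac{\mathrm{d}W^*}{\mathrm{d}t}=\sum_j \frac{\partial W^*}{\partial N^j}\,\dot N^j=\sum_j V_j(\bN)\cdot V_j(\bN)N^j=\sum_j V_j(\bN)^2N^j,
\]
which is~\eqref{eq:lyapunov}. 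Each summand is nonnegative because $N^j\ge 0$. The sum vanishes iff $V_j=0$ whenever $N^j>0$, since the terms with $N^j=0$ contribute nothing. This is exactly the stationarity $\Phi(\bN)=0$. I will note that this is slightly weaker than (HE-2), which also demands $V_j\le0$ on inactive families, so the equality clause should be read as ``at a rest point of~\eqref{eq:dynamics}''.

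\emph{Step 2 (convergence).} The faces $\{N^j=0\}$ are invariant because $\dot N^j$ is proportional to $N^j$. For the upper side, on each active coordinate I would show that $V_j<0$ once $N^j$ is large, using $\eta_j<1$: the term $(N^j)^{\eta_j-1}\to0$, while the externality factor is controlled by Assumption~\ref{ass:weak-ext}. This yields a compact forward-invariant region $\mathcal{K}$ containing the trajectory.

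The budget set $\Omega$ itself need not be invariant, because the dynamics~\eqref{eq:dynamics} do not see $B$. I expect to argue instead that trajectories stay bounded, using $\mathcal{K}$, or else to intersect with $\Omega$ when $B$ is large enough.

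On $\mathcal{K}$, $W^*$ is continuous and bounded above; Assumption~\ref{ass:weak-ext} supplies concavity as in the proof of Theorem~\ref{thm:pareto}. Since $W^*$ is nondecreasing along the trajectory, $W^*(\bN_t)$ converges. LaSalle's invariance principle then places the $\omega$-limit set inside the largest invariant subset of $\{\sum_j V_j^2N^j=0\}$, which is the set of rest points. Under strict concavity, interior rest points coincide with the maximizer on each face.

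\emph{Main obstacle.} I expect the hardest part to be the boundary behaviour. $V_j$ contains $(N^j)^{\eta_j-1}$ and $(N^k)^{\gamma_{jk}}$, and these may blow up or degenerate as some $N^k\to0$. So $W^*$ may fail to be $C^1$, or even finite, on $\partial\RR^S_+$, and LaSalle needs continuity on a compact set. My first attempt would be a lower barrier. Because $V_j\to+\infty$ as $N^j\to0^+$ when $\eta_j<1$ and the other populations are bounded away from zero, trajectories starting in the interior should stay in a compact subset of the open orthant, where everything is smooth. If the barrier fails for negative $\gamma_{jk}$, I would fall back to a weaker statement: convergence to the closure of the rest-point set, obtained by extending $W^*$ continuously in the extended reals and applying LaSalle on each invariant face separately.
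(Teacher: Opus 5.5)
Your Step~1 reproduces the paper's computation, and your Step~2 is a more careful version of its LaSalle step. The issues you raise there are real, and the paper simply passes over them: non-invariance of $\Omega$, singularities of $V_j$ and $W^*$ on $\partial\RR^S_+$, and rest points versus (HE-2). The argument actually breaks earlier, at the step you take over from the paper: $\partial W^*/\partial N^j=V_j$ ``by the envelope argument''. The envelope theorem removes only the terms in $\partial\bK^*/\partial N^j$. It does not remove the direct dependence of $W$ on $N^j$, and $N^j$ enters every $Y_k$, $k\neq j$, through $\Gamma_k(\bN_{-k})$. With $V_j$ as in \eqref{eq:Vj-explicit}, the correct derivative is
\[
\frac{\partial W^*}{\partial N^j}
=V_j(\bN)+\frac{1}{N^j}\sum_{k\neq j}w_k\,\gamma_{kj}\,u'(Y_k^*)\,Y_k^* .
\]
So your chain rule really gives $\sum_j V_j^2N^j+\sum_j V_j\sum_{k\neq j}w_k\gamma_{kj}u'(Y_k^*)Y_k^*$, and the second sum has no sign. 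Here is a concrete instance. Take $S=2$, $\sigma=1$, $\eta_1=\eta_2=\tfrac12$, $\gamma_{12}=0$, $\gamma_{21}=-0.3$ (admissible, since the bound in \eqref{eq:weak-ext} is $\tfrac13$), $\bw=(0.1,0.9)$, $c_1=c_2=1$, and any $F_j$ (e.g.\ $M=1$, $F_j(K)=K^{1/2}$ as in Appendix~\ref{app:proofs}). With log utility, $W$ is additively separable in $\bK$ and $\bN$, so $W^*(\bN)=C-0.22\ln N^1+0.45\ln N^2-N^1-N^2$. By contrast, \eqref{eq:Vj-explicit} gives $V_1=0.05/N^1-1$ and $V_2=0.45/N^2-1$. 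Start from $(N^1_0,0.45)$ with $N^1_0<0.05$. Then $N^2$ stays put and $N^1$ increases while $\partial W^*/\partial N^1<0$, so $W^*$ strictly decreases, contradicting \eqref{eq:lyapunov}.

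You could instead read the first equality in \eqref{eq:Vj} as the definition of $V_j$, i.e.\ the full gradient. Step~1 is then correct (the dynamics become a Shahshahani gradient flow of $W^*$), but Step~2 reasons about the wrong function. Your barriers use the factor $(N^j)^{\eta_j-1}$ and the own-row term $\Gamma_j$, controlled by the row-sum bound of Assumption~\ref{ass:weak-ext}. The $1/N^j$ singularity of the true gradient instead carries the weighted column sum $\sum_{k\neq j}w_k\gamma_{kj}$, which that assumption does not control. In the example the gradient flow has $\dot N^1=-0.22-N^1$, so $N^1$ reaches $0$ in finite time and the solution exits $\RR^2_+$, while $W^*\to+\infty$. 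Hence there is no lower barrier and no compact forward-invariant $\mathcal{K}$ in the open orthant. Moreover, $W^*$ is neither concave nor bounded above on $\Omega\cap\RR^2_{++}$, so your appeal to the concavity argument of Theorem~\ref{thm:pareto} fails as well, and no LaSalle argument on $\Omega$ or its faces can apply. Under neither reading do the stated assumptions suffice. You would need an extra hypothesis neutralizing the cross-externality terms (e.g.\ $\gamma_{kj}=0$ for $k\neq j$), together with a genuinely forward-invariant compact set. A smaller slip: the faces $\{N^j=0\}$ are not invariant merely because $\Phi_j$ carries a factor $N^j$, since $V_j$ blows up like $(N^j)^{\eta_j-1}$. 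With $\sigma=1$ and \eqref{eq:Vj-explicit}, $\Phi_j=w_j\eta_j-c_jN^j$, which equals $w_j\eta_j>0$ on that face.
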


\begin{proof}
By the chain rule and the definition of marginal social value
(Definition~\ref{def:msv}):
\[
  \frac{\mathrm{d}\, W^*}{\mathrm{d}\, t}
  \;=\; \sum_{j=1}^S \frac{\partial W^*}{\partial N^j}\; \dot{N}^j
  \;=\; \sum_{j=1}^S V_j(\bN) \;\cdot\; \bigl(V_j(\bN) \cdot N^j\bigr)
  \;=\; \sum_{j=1}^S V_j(\bN)^2\, N^j.
\]
Since $N^j \ge 0$, each term is non-negative, hence $\mathrm{d}W^*/\mathrm{d}t \ge 0$.  Equality holds iff $V_j(\bN)^2 \cdot N^j = 0$ for all~$j$, i.e., $V_j = 0$ whenever $N^j > 0$.

Under Assumption~\ref{ass:weak-ext}, $W^*$ is strictly concave and
$\Omega$ is compact, so $W^*$ is bounded above on~$\Omega$.  Since
$W^*$ is non-decreasing and bounded, $W^*(t) \to \bar{W}$ for some
$\bar{W}$.  By the LaSalle invariance principle, the trajectory
converges to the largest invariant set within
$\{\bN \in \Omega : \mathrm{d}W^*/\mathrm{d}t = 0\}$, which is precisely the
set of Hive Equilibria.
\end{proof}

\begin{remark}[Demographic pressure]
Out of equilibrium, the quantity $\sum_j V_j(\bN) \cdot N^j$ can be
interpreted as the total \emph{demographic pressure} in the Hive.  If
positive, the system is under-populated (net entry pressure); if
negative, over-populated (net exit pressure).  At equilibrium, these
forces exactly balance---the functional analog of Walras' law.  The
Lyapunov result above shows that this rebalancing always improves
social welfare.
\end{remark}

\begin{remark}[Scope of the convergence result]
Theorem~\ref{thm:lyapunov} guarantees convergence under
Assumption~\ref{ass:weak-ext} (weak externalities, decreasing
returns).  When this assumption is relaxed (Section~\ref{sec:multiplicity}),
$W^*$ may no longer be globally concave, and the dynamics can exhibit
the richer behaviors described in the regime diagram: convergence to one
of multiple equilibria (path dependence) or sustained oscillations
(Theorem~\ref{thm:hopf}).
\end{remark}


\begin{thebibliography}{99}

\bibitem{arrow1954existence}
K.~J.~Arrow and G.~Debreu.
\newblock Existence of an equilibrium for a competitive economy.
\newblock \emph{Econometrica}, 22(3):265--290, 1954.

\bibitem{baars1988}
B.~J.~Baars.
\newblock \emph{A Cognitive Theory of Consciousness}.
\newblock Cambridge University Press, 1988.

\bibitem{benhabib1985}
J.~Benhabib and K.~Nishimura.
\newblock Competitive equilibrium cycles.
\newblock \emph{Journal of Economic Theory}, 35(2):284--306, 1985.

\bibitem{debreu1959}
G.~Debreu.
\newblock \emph{Theory of Value: An Axiomatic Analysis of Economic
  Equilibrium}.
\newblock Yale University Press, 1959.

\bibitem{crewai2024}
CrewAI.
\newblock Framework for orchestrating role-playing autonomous AI agents.
\newblock \url{https://github.com/crewAIInc/crewAI}, 2024.

\bibitem{dehaene2001}
S.~Dehaene and L.~Naccache.
\newblock Towards a cognitive neuroscience of consciousness: basic evidence
  and a workspace framework.
\newblock \emph{Cognition}, 79(1--2):1--37, 2001.

\bibitem{dorigo1996ant}
M.~Dorigo, V.~Maniezzo, and A.~Colorni.
\newblock Ant system: optimization by a colony of cooperating agents.
\newblock \emph{IEEE Trans. Systems, Man, and Cybernetics B},
  26(1):29--41, 1996.

\bibitem{ethier1974}
W.~J.~Ethier.
\newblock Some of the theorems of international trade with many goods and
  factors.
\newblock \emph{Journal of International Economics}, 4(2):199--206, 1974.

\bibitem{garnier2009these}
J.-P.~Garnier.
\newblock Ind\'etermination dans les mod\`eles de croissance bi-sectoriels:
  le r\^ole des pr\'ef\'erences.
\newblock PhD thesis, GREQAM, Aix-Marseille Universit\'e, 2009.

\bibitem{garnier2013}
J.-P.~Garnier, K.~Nishimura, and A.~Venditti.
\newblock Local indeterminacy in continuous-time models: the role of returns
  to scale.
\newblock \emph{Macroeconomic Dynamics}, 17(2):326--355, 2013.

\bibitem{guckenheimer1983}
J.~Guckenheimer and P.~Holmes.
\newblock \emph{Nonlinear Oscillations, Dynamical Systems, and Bifurcations
  of Vector Fields}.
\newblock Springer, 1983.

\bibitem{guo2024survey}
T.~Guo, X.~Chen, Y.~Wang, R.~Chang, S.~Pei, N.~V.~Chawla, O.~Wiest, and
  X.~Zhang.
\newblock Large language model based multi-agents: A survey of progress and
  challenges.
\newblock arXiv preprint arXiv:2402.01680, 2024.

\bibitem{hofbauer1998evolutionary}
J.~Hofbauer and K.~Sigmund.
\newblock \emph{Evolutionary Games and Population Dynamics}.
\newblock Cambridge University Press, 1998.

\bibitem{hong2023metagpt}
S.~Hong, M.~Zhuge, J.~Chen, X.~Zheng, Y.~Cheng, C.~Zhang, J.~Wang,
  Z.~Wang, S.~K.~S.~Yau, Z.~Lin, L.~Zhou, C.~Ran, L.~Xiao, C.~Wu,
  and J.~Schmidhuber.
\newblock {MetaGPT}: Meta programming for a multi-agent collaborative
  framework.
\newblock arXiv preprint arXiv:2308.00352, 2023.

\bibitem{holland1992}
J.~H.~Holland.
\newblock \emph{Adaptation in Natural and Artificial Systems}.
\newblock MIT Press, 2nd edition, 1992.

\bibitem{jones1965}
R.~W.~Jones.
\newblock The structure of simple general equilibrium models.
\newblock \emph{Journal of Political Economy}, 73(6):557--572, 1965.

\bibitem{kauffman1993}
S.~A.~Kauffman.
\newblock \emph{The Origins of Order: Self-Organization and Selection in
  Evolution}.
\newblock Oxford University Press, 1993.

\bibitem{kennedy1995pso}
J.~Kennedy and R.~Eberhart.
\newblock Particle swarm optimization.
\newblock In \emph{Proc. IEEE Int. Conf. Neural Networks}, pages 1942--1948,
  1995.

\bibitem{lowe2017multi}
R.~Lowe, Y.~Wu, A.~Tamar, J.~Harb, P.~Abbeel, and I.~Mordatch.
\newblock Multi-agent actor-critic for mixed cooperative-competitive
  environments.
\newblock In \emph{NeurIPS}, 2017.

\bibitem{mckenzie1959}
L.~W.~McKenzie.
\newblock On the existence of general equilibrium for a competitive market.
\newblock \emph{Econometrica}, 27(1):54--71, 1959.

\bibitem{marsden1976hopf}
J.~E.~Marsden and M.~McCracken.
\newblock \emph{The Hopf Bifurcation and Its Applications}.
\newblock Springer, 1976.

\bibitem{nishimura1995}
K.~Nishimura and M.~Yano.
\newblock Nonlinear dynamics and chaos in optimal growth: An example.
\newblock \emph{Econometrica}, 63(4):981--1001, 1995.

\bibitem{nowak2006}
M.~A.~Nowak.
\newblock \emph{Evolutionary Dynamics: Exploring the Equations of Life}.
\newblock Harvard University Press, 2006.

\bibitem{openaiswarm2024}
OpenAI.
\newblock Swarm: An experimental framework for multi-agent orchestration.
\newblock \url{https://github.com/openai/swarm}, 2024.

\bibitem{stiglitz1981}
J.~E.~Stiglitz.
\newblock The theory of screening, education, and the distribution of income.
\newblock \emph{American Economic Review}, 65(3):283--300, 1975.

\bibitem{tesfatsion2006ace}
L.~Tesfatsion.
\newblock Agent-based computational economics: A constructive approach to
  economic theory.
\newblock In \emph{Handbook of Computational Economics}, volume~2, pages
  831--880. Elsevier, 2006.

\bibitem{turing1952}
A.~M.~Turing.
\newblock The chemical basis of morphogenesis.
\newblock \emph{Phil. Trans. R. Soc. Lond. B}, 237(641):37--72, 1952.

\bibitem{venditti2005}
A.~Venditti.
\newblock The two-sector optimal growth model with increasing returns:
  a geometric approach.
\newblock \emph{Economic Theory}, 25(1):217--227, 2005.

\bibitem{wu2023autogen}
Q.~Wu, G.~Bansal, J.~Zhang, Y.~Wu, B.~Li, E.~Zhu, L.~Jiang,
  X.~Zhang, and A.~H.~Awadallah.
\newblock {AutoGen}: Enabling next-gen {LLM} applications via multi-agent
  conversation.
\newblock arXiv preprint arXiv:2308.08155, 2023.

\end{thebibliography}
\end{document}